\newcommand{\footremember}[2]{%
    \footnote{#2}
    \newcounter{#1}
    \setcounter{#1}{\value{footnote}}%
}
\newcommand{\footrecall}[1]{%
    \footnotemark[\value{#1}]%
}
\newtheorem{lemma}{Lemma}
\newtheorem{proposition}{Proposition}
\newtheorem{remark}{Remark}
\title{Games of Social Distancing during an Epidemic: Local vs Statistical Information}
\author{A.-R. Lagos\footremember{trailer}{ National Technical University of Athens, School of Electrical and Computer Engineering, 9 Iroon Polytechniou str.,Athens, Postal Code 157 80, Greece.   \newline E-mails: A.-R. Lagos~ lagosth993@gmail.com,  I.~ Kordonis jkordonis1920@yahoo.com,   G.P. ~Papavassilopoulos yorgos@netmode.ntua.gr}
\and I. Kordonis\footrecall{trailer}
\and G. P. Papavassilopoulos\footrecall{trailer} }
\begin{document}
\maketitle

\begin{abstract}
The spontaneous behavioral changes of the agents during an epidemic can have significant effects on the delay and the prevalence of its spread. In this work, we study a social distancing game among the agents of a population, who determine their social interactions during the spread of an epidemic. The interconnections between the agents are modeled by a network and local interactions are considered. The payoffs of the agents depend on their benefits from their social interactions, as well as on the costs to their health due to their possible contamination. The information available to the agents during the decision making plays a crucial role in our model. We examine two extreme cases. In the first case, the agents know exactly the health states of their neighbors and in the second they have statistical information for the global prevalence of the epidemic. The Nash equilibria of the games are studied and, interestingly, in the second case the equilibrium strategies for an agent are either full isolation or no social distancing at all. Experimental studies are presented through simulations, where we observe that in the first case of perfect local information the agents can affect significantly the prevalence of the epidemic with low cost for their sociability, while in the second case they have to pay the burden of not being well informed. Moreover, the effects of the information quality (fake news), the health care system capacity and the network structure are discussed and relevant simulations are provided, which indicate that these parameters affect the size, the peak and the start of the outbreak, as well as the possibility of a second outbreak.
\end{abstract}

\section{Introduction}
The emergence of the Covid-19 pandemic is one of the most significant events of this era. It affects many sectors of human daily life, it indicates the inefficiency of many health care systems and it leads to state interventions in the functioning of the society through urgent measures, to economic depression and to human behavioral changes. Different states followed significantly different strategies to contain the pandemic and achieved respectively different levels of success. However, as the pandemic progresses, the interest about its nature, its dynamics and the need to control it made epidemiology a scientific field known to almost everyone and its terminology used daily by the media and included in many conversations. Humans spontaneously react to the emergence of Covid-19, following or disrespecting the state directions and legislation, adaptively adjusting their behavior based on their perceived risk.  Thus, a question naturally arises: How this spontaneous behavioral change of humans affects the prevalence of the disease and under what assumptions would it be effective in reducing the spread of the outbreak? \par
Humankind has always been haunted by epidemics, some of which have been recorded from historians, such as the plague in ancient Athens (430BC) and the Black Death in medieval Europe. So, epidemiology has concerned a lot of scientists during the ages and mathematical models for this field were first developed in the $18^{th}$ century \cite{Bernoulli}. Nowadays, the most prevalent approach in epidemic modeling is the compartmental models, introduced a century ago \cite{Kermack},\cite{Ross}. These models assume that there exist several compartments where an agent can belong (e.g. Susceptible-Infected-Recovered) and derive ordinary differential equations for the description of the dynamics of the population in each compartment. A main assumption for that analysis to hold is the well mixing of the population. However, there is enough evidence from social and other kinds of human networks that this assumption does not hold in many cases. \par
Due to that fact, novel approaches in epidemic modeling take into consideration the heterogeneous networked structure of human interconnections \cite{Pastor-Satorras}. A branch of these approaches uses results from the percolation theory to estimate the spread of the epidemic \cite{Newman1,Newman2,Meyers,Garnett,Sander}. Another branch, that is gaining a lot of attention \cite{Epstein1}, is the agent-based models \cite{Epstein2,Cliff-Australia}, which consider several parameters of each agent profile (e.g. residence, age, mobility pattern) and run computer simulations for large populations of such agents to estimate the spread of the disease. There exist also several recent works \cite{Zhang1,Chang,Bagnoli} which take into consideration the networked structure of human interconnections and they derive the -compartmental- models they use, through a mean-field approach. \par
Regardless of its derivation and its mathematical formulation, the usefulness of epidemic modeling is to guide states and/or individuals in taking the right protective measures to contain the epidemics. These measures, besides the efforts to develop appropriate meditation, can be roughly organized into two categories: vaccination \cite{Zhang1,Chang}, \cite{Bauch1,Bauch2,Reluga1,Reluga2}, \cite{Zhang2,Fine-Clarkson} and behavioral changes
\cite{Kremer,Vardavas,Del_Valle,Chen2,Funk-Review}, \cite{Reluga3,Poletti1,Poletti2,Poletti3}, \cite{Funk1,Chen1,d'Onofrio}. In the second case, the actions taken by the agents may vary from usage of face masks and practice of better hygiene to voluntary quarantine, avoidance of congregated places, application of preventive medicine and other safe social interactions. \par
In both cases, a very important fact that determines the effectiveness of the protective measures is that the agents make rational choices with regard to the self-protective activities they adopt by comparing the costs and benefits of these actions. Even in the case that a central authority imposes a policy, it is often up to agents to fully comply with this or not, even if they will have to pay a high cost if they get caught. From these considerations game theory arises as a natural tool to model and analyze the agents behavior with respect to the adoption of protective measures. Many recent studies on this field incorporate a game theoretic analysis \cite{Bauch1,Bauch2,Reluga1,Reluga2,Reluga3,Poletti1,Poletti2,Poletti3,Fu,Zhang2,van_Boven,Fine-Clarkson,Philipson1,Philipson2,Philipson3}, some of which are summarized in \cite{Game_survey}. It should be pointed out here, that the assumption of rational agents does not always hold true, since in many cases the agents decisions are not based on the maximization of their personal utility. Moreover, in the cases it holds it is a ``double-edged sword" \cite{Zhang2}, because self-interest leads the agents to adopt strategies different than the ones which maximize group interest \cite{van_Boven,Fine-Clarkson,Philipson1,Philipson2,Philipson3}. Another main characteristic of game-theoretic approaches is the crucial role of information available to the agents for their decision. The remarkable impact of information on the epidemic outbreaks has been pointed out in \cite{Funk1,Chen1,d'Onofrio}, where the authors consider an extra dynamic modeling the spread of information, coupled with the contagion dynamics. The informed agents are supposed to alter their behavior and affect this way the disease prevalence. \par
Following the research directions presented in the previous paragraphs, and specifically the game-theoretic approaches for the modeling of behavioral changes \cite{Reluga3,Poletti1,Poletti2,Poletti3},\cite{Chen1}, we propose and analyze a game-theoretic model for social distancing in the presence of an epidemic. Our model differs from \cite{Reluga3,Poletti1,Poletti2,Poletti3},\cite{Chen1} since it takes under consideration the networked structure of human interconnections and the locality of interactions, without attempting a mean-field approach. Each agent is considered to have her own state variables and information and choosing her action based on these - so it could be characterized an agent-based approach. Moreover, the actions of the agents affect the intensity of their relations with their neighbors and use or do not use the available connections. Changes in the topology of the network have been considered as a phenomenon in \cite{Gross,Shaw,Zanette}, but not from a game-theoretic perspective where the agents can choose rationally which connections to use and induce this way an ``active" topology. Furthermore, there exist several works on game-theoretic models which consider the networked structure of human interconnections, such as \cite{Zhang2,d'Onofrio,Fu,Somarakis}, where the strategy adoption is based on imitation of ones neighbors. Contrary to that, in our model the agents do not imitate the most effective strategy of their neighborhood, but design their best response based on the available information. We consider two different information patterns: perfect local information for the states of ones neighbors and statistical information for the global prevalence of the epidemic and investigate the different effects of these patterns.\par
Through the analysis of the proposed model we get several results. At first, we observe that in the case of perfect local information the agents can affect significantly the prevalence of the epidemic with low cost for their sociability, while in the case of statistical information they have to pay the burden of not being well informed. Secondly, in the case the agents have only statistical information, each agent's action is either full isolation or no social distancing at all. Lastly, we investigate, through experimental studies, the effects of the information quality (fake or biased news), the health care system capacity and the network structure and we conclude that these parameters affect the size, the peak and the start of the outbreak, as well as the possibility of a second outbreak.\par
The rest of the paper is organized as follows. In section \ref{s.model} the model for the epidemic outbreak and for the social distancing game between the agents is introduced. In section \ref{s.full_info} we analyze the game for the case that the agents have perfect local information for the states of their neighbors. In section \ref{s.distr_info} we analyze the game for the case that the agents have statistical information for the global prevalence of the epidemic. In section \ref{simulation_parameters} we present simulations for the games with the two different information patterns and compare the results. A discussion follows in section \ref{discussion}, where several variations of the problem are considered, such as experimentation on various network types \cite{random}-\cite{Scale_free}, the impact of fake information and of the finite capacity of a health care system and related simulations are presented and annotated.

\section{The model}\label{s.model}

We denote by $G=(V,E)$ an undirected graph, where $V=\{1,...,n\}$ is the set of its nodes representing the agents and $E\subset V\times V$ is the set of its edges indicating the social relations between the agents. $A=\{a_{ij}\}$ is the adjacency matrix of the graph i.e., $a_{ij}=1$ if $(i,j)\in E$, otherwise $a_{ij}=0$. $N_i=\{j: (i,j)\in E\}$ is the neighborhood of agent $i$, and  $\bar{N_i}=N_i \cup \{i\}$. $d_i=\sum_{j\in N_i}a_{ij}$ is the degree of node $i$, that is the number of her neighbors. We consider also a matrix $S=\{s_{ij}\}$, with the same sparsity pattern with the adjacency matrix $A$, which indicates the desire of each agent to meet with each one of her neighbors.\par
Social distancing is one of the most effective behavioral changes that people can adopt during an epidemic outspread. However, as mentioned in the introduction, the choice to adopt this altered behavior is, in many cases, up to the agents. So, we consider a social distancing game, which is repeated at each day during the outspread of the epidemic. The actions of the agents model the intensity of the relations with each one of their neighbors  they choose to have at each day. So, denoting by $k$ the current day, the action of  agent $i$ is a vector of length equal to the number of her neighbors given by:
\begin{equation}\label{strategies}
  u^i(k)=[u^i_{j_1}(k)...u^i_{j_{d_i}}(k)] \in [0,1]^{d_i},
\end{equation}
where:
\begin{equation*}
  N_i=\{j_1,...,j_{d_i}\}.
\end{equation*}

According to the strategies chosen by the agents we have an induced weighted adjacency matrix $W(k)=[w_{ij}(k)]$ for the network, which indicates the meeting probabilities between two neighbors at day $k$, where $w_{ij}(k)$ have the following form:

\begin{equation}\label{rep_weights}
w_{ij}(k)= \left\{
\begin{array}{ll}
      0 &,\textrm{if} \quad a_{ij}=0 \\
      u^i_j(k)u^j_i(k) &,\textrm{if} \quad a_{ij}=1 \\
\end{array}
\right.
\end{equation}

We consider that each agent has a health state consisted of two variables $x_i(k)$, which indicates if the agent has been infected before day $k$ and $r_i(k)$, which indicates the duration of her infection and consequently if she has recovered. Here we assume that all the infected agents recover after $R$ days.
\vskip 0.3 cm
The vector $x^0=[x_i^0]$ indicates the initial conditions for the $x_i$ state of the agents. The probability $p_x^0$ indicates the distribution of the initial conditions, which are i.i.d. random variables:
\begin{equation}\label{state_distr}
x_i^0= \left\{
\begin{array}{ll}
      0 &, \textrm{w.p.} \quad 1-p_x^0\\
      1 &, \textrm{w.p.} \quad p_x^0 \\
\end{array}
\right.
\end{equation}

The vector $r^0=\mathbb{0}_n$ indicates the initial conditions for the $r_i$ state of the agents.
\vskip 0.3 cm
These states evolve as follows:

\begin{equation}\label{state_rep}
x_i(k+1)= \left\{
\begin{array}{ll}
      x_i(k) &, \textrm{w.p.} \prod_{j \in N_i}(1-w_{ij}(k)p^cx_j(k)\mathcal{X}_{\{r_j(k)<R\}})\\
      1 &, otherwise \\
\end{array}
\right.
\end{equation}

\begin{equation}\label{costate_rep}
r_i(k+1)= \left\{
\begin{array}{ll}
      r_i(k)+x_i(k) &, \textrm{if} \quad  r_i(k)<R\\
      R &,\textrm{if} \quad r_i(k)=R  \\
\end{array}
\right.
\end{equation}
where $R$ is the duration of the recovery period.
\vskip 0.3cm
The probabilities $w_{ij}(k)p^cx_j(k)\mathcal{X}_{\{r_j(k)<R\}}$ indicate the possibility to have a meeting at day  $k$ and get infected by another agent. That agent can transmit the disease if she has been infected $(x_j(k)=1)$ and has not recovered yet $(r_j(k)<R))$, which is shown with the use of the characteristic function:

\begin{equation*}
\mathcal{X}_{\{r_j(k)<R\}}= \left\{
\begin{array}{ll}
      1 &, \textrm{if} \quad r_j(k)<R\\
      0 &, \textrm{if} \quad r_j(k)=R  \\
\end{array}
\right.
\end{equation*}

\begin{remark}\label{rem1}
  In this simple model, which is a discrete analogue of the  SIR model  on graphs, we assume that every infected agent recovers. That is to avoid changes in the graph topology, which would make the analysis of the game much more difficult. We expect this to cause minor differences in the case of an epidemic with low mortality.
\end{remark}

In order to model the probable contamination of an agent $j$ by her neighbor agent $i$, we make a similar assumption with the mean field approach \cite{Bagnoli},  where the authors assume that the graph topology has no loops and there is no correlation between the states of the agents. Thus, the contamination probability can be expressed as a function of the well known basic reproduction number $R_0$:
\begin{equation}\label{contamination_probability}
  p^c(R_0)=1-(1-\frac{R_0}{\bar{d}})^{\frac{1}{R}}.
\end{equation}
Similar derivations for the probabilities that govern the transmission of the disease over networks of interconnected agents are existing in the relevant bibliography, such as \cite{Newman1}.\par
We assume that the agents choose rationally their actions, based on the available information, by maximizing their payoffs. These payoffs are considered to  depend solely on the benefits from the social interactions between the agents and on the costs to their health due to possible contamination. In reality, the decision of a behavioral change depends also on socioeconomic and ethical considerations, which are omitted in this first approach, for the sake of simplicity. So, in our case the instantaneous payoffs depend on two terms. The first one indicates the satisfaction that each agent derives by the interaction with her neighbors, these benefits differ between her neighbors. The second term shows the costs an agent suffers if she has been infected. Since the agent does not know her health state the next day, she tries to estimate it based on the available information. The parameters $G_i$ indicate the importance of the infection for each agent. We divide the agents into two groups: the vulnerable (large $G_i$) and the ones who are non-vulnerable (small $G_i$). The game is in fact dynamic since the payoffs depend on the evolving health states of the agents. However, the agents are considered to be myopic and able to predict just a day ahead  so the payoffs have the following form at each day:

\begin{equation}\label{payoffs}
  J_i(k)=\sum_{j\in N_i}s_{ij}u^i_j(k)u^j_i(k)-G_iE\{x_i(k+1)|I_i(k)\}\mathcal{X}_{\{r_i(k+1)<R\}}
\end{equation}
where $I_i(k)$ is the information available to agent $i$ at day $k$. The agents decide what action to take based on this information. So, the actions are, in fact, strategies of the available information:

\begin{equation}\label{strategy_form}
  u^i(k)=\gamma_i(I_i(k))
\end{equation}

\section{Perfect local state feedback information}\label{s.full_info}
We first study  the case where the agents have perfect local state feedback information. That is, agents know exactly their current  health state and the current health states of their neighbors before taking the decision to meet them or not:

\begin{equation}\label{perfect_local_state_feedback_rep}
  I_i^F(k)=\{ x_j(k),r_j(k):  j \in \bar{N}_i\}.
\end{equation}

In order to analyze the social distancing game under the perfect local state feedback information \eqref{perfect_local_state_feedback_rep}, we follow a step-wise analysis, considering a static, one-step game. All the time indexes, indicating the days, will be omitted during this analysis. Based on the information \ref{perfect_local_state_feedback_rep} we can explicitly calculate the conditional expectation of each agent's next state $E\{x_i^+| I_i\}$:

\begin{equation*}\label{state_expectation}
  E\{x_i^+|I_i\}=x_i \prod_{j \in N_i}(1-w_{ij}p^cx_j\mathcal{X}_{\{r_j<R\}})+(1- \prod_{j \in N_i}(1-w_{ij}p^cx_j\mathcal{X}_{\{r_j<R\}}))
\end{equation*}
since from \eqref{strategy_form} the strategies are measurable on the sigma fields defined by $x$, so $E\{u^i_j|x\}=u^i_j$. Thus, the payoffs have the following form:

\begin{equation}\label{J_step_1}
  J_i=\sum_{j\in N_i}s_{ij}u^i_ju^j_i-\left[G_i(x_i-1) \prod_{j \in N_i}(1-w_{ij}p^cx_j\mathcal{X}_{\{r_j<R\}})+G_i\right]\mathcal{X}_{\{r_i^+<R\}}.
\end{equation}

\begin{proposition}
  The strategy profile $u=\mathbb{0}_{\sum d_i}$ is a Nash equilibrium for the game with perfect local state feedback, since it results to indifference for all the agents.
\end{proposition}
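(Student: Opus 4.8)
The plan is to verify directly that when every agent plays the zero action, no unilateral deviation can strictly improve any agent's payoff; in fact I expect to establish the stronger statement that each agent's payoff is completely insensitive to her own action, which is exactly the ``indifference'' asserted in the statement. First I would fix an arbitrary agent $i$ and freeze all the other agents at the zero action, so that in particular $u^j_i = 0$ for every neighbor $j \in N_i$. The decisive observation is that the induced meeting probabilities are bilinear, $w_{ij} = u^i_j u^j_i$ by \eqref{rep_weights}, so $w_{ij} = u^i_j \cdot 0 = 0$ no matter what agent $i$ chooses for her own control $u^i_j$.

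Next I would substitute $w_{ij} = 0$ into the one-step payoff \eqref{J_step_1}. The benefit term $\sum_{j \in N_i} s_{ij} u^i_j u^j_i$ vanishes because each summand carries the factor $u^j_i = 0$, and the product $\prod_{j \in N_i}(1 - w_{ij} p^c x_j \mathcal{X}_{\{r_j < R\}})$ collapses to $1$. The payoff therefore reduces to $J_i = -\left[G_i(x_i - 1) + G_i\right]\mathcal{X}_{\{r_i^+ < R\}} = -G_i x_i \mathcal{X}_{\{r_i^+ < R\}}$, which contains no occurrence of $u^i$. Hence $J_i$ is constant over the whole action set $[0,1]^{d_i}$, so every action, and in particular the zero action, is a best response against the others' zero actions.

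Since this argument holds verbatim for every agent $i$ simultaneously, the profile $u=\mathbb{0}_{\sum d_i}$ satisfies the best-response condition for all players at once, which is precisely the definition of a Nash equilibrium. The only place requiring care is the algebraic collapse of the bracketed cost term: one must check that the $G_i(x_i-1)$ contribution and the stand-alone $G_i$ combine so that the surviving cost depends solely on the fixed current state $x_i$ (and on $r_i^+$, which by \eqref{costate_rep} is itself determined by the current $r_i$ and $x_i$, not by any control variable). Beyond this bookkeeping there is no genuine obstacle, because the structural reason for the result is simply that an isolated neighborhood severs the multiplicative coupling through which any single agent could otherwise influence either her sociability reward or her infection risk.
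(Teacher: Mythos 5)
Your proof is correct and follows essentially the same route as the paper, which simply asserts the indifference without spelling it out: you verify in detail that with all neighbors at zero, the bilinear coupling $w_{ij}=u^i_j u^j_i$ kills every occurrence of $u^i$ in $J_i$, reducing it to $-G_i x_i\mathcal{X}_{\{r_i^+<R\}}$, so every action of agent $i$ is a best response. This is exactly the "indifference for all the agents" the paper invokes, just made explicit.
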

However, we can observe the existence of other Nash equilibria.

\begin{proposition}
  The best response of each agent always contains a point in $\{0,1\}^{d_i}$, i.e. the vertices of the action space. Therefore, there is no strict Nash equilibrium in $[0,1]^{\sum d_i}\setminus \{0,1\}^{\sum d_i}$
\end{proposition}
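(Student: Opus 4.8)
The plan is to exploit the fact that, once every other agent's action is held fixed, agent $i$'s payoff $J_i$ from \eqref{J_step_1} is an affine function of each individual coordinate $u^i_j$ of her own action vector; that is, $J_i$ is \emph{multilinear} on the box $[0,1]^{d_i}$. First I would verify this termwise. The benefit term $\sum_{j\in N_i}s_{ij}u^i_j u^j_i$ is linear in $u^i$, since with the neighbors' actions frozen each $u^j_i$ is a constant. For the cost term, recall from \eqref{rep_weights} that $w_{ij}=u^i_j u^j_i$, so writing $c_j=u^j_i p^c x_j \mathcal{X}_{\{r_j<R\}}$ (a constant with respect to $u^i$) the product becomes $\prod_{j\in N_i}(1-u^i_j c_j)$, in which each coordinate $u^i_j$ appears in exactly one factor and hence to degree at most one — this is precisely multilinearity. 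Finally, from the recovery dynamics \eqref{costate_rep} the quantity $r_i^+$ is determined by the current states $r_i,x_i$ alone and carries no dependence on $u^i$, so $\mathcal{X}_{\{r_i^+<R\}}$ together with $G_i$ and $x_i$ merely scale and shift this multilinear expression.

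Given multilinearity, I would establish the first claim by a coordinate-ascent argument. Fix any best response $u^{i*}$ and examine the first coordinate: with all others frozen, $J_i$ is affine in $u^i_{j_1}$, so it is maximized at an endpoint of $[0,1]$, and replacing $u^{i*}_{j_1}$ by the corresponding value in $\{0,1\}$ cannot decrease $J_i$, yielding another best response. Repeating this successively over all $d_i$ coordinates produces a best response lying in $\{0,1\}^{d_i}$, so the best-response correspondence always contains a vertex of the action space.

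For the second claim I would argue by contradiction from the same coordinatewise-affine structure. Suppose $u^*$ is a Nash equilibrium with $u^*\notin\{0,1\}^{\sum d_i}$; then some agent $i$ has a coordinate $u^{i*}_j\in(0,1)$. Since $J_i$ is affine in $u^i_j$, optimality of $u^{i*}$ forces the slope in that coordinate to vanish, for otherwise shifting $u^i_j$ toward an endpoint would strictly raise $J_i$, contradicting that $u^{i*}$ is a best response. But a zero slope means agent $i$ is indifferent across all values of $u^i_j\in[0,1]$, so $u^{i*}_j=0$ and $u^{i*}_j=1$ are equally good and her best response is not unique; hence $u^*$ cannot be strict. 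I do not expect a genuine obstacle here: the whole result hinges on the observation of multilinearity, and the only point demanding care is confirming that $\mathcal{X}_{\{r_i^+<R\}}$ introduces no hidden dependence on the optimizing action. The degenerate zero-slope case is not an impediment but is in fact the very mechanism that excludes strict equilibria off the vertices.
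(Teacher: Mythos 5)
Your proof is correct, and it reaches the conclusion by a more elementary route than the paper. The paper starts from the same structural observation---that $J_i$ has degree at most one in each coordinate $u^i_j$---but packages it as $\nabla^2 J_i=0$, invokes the maximum principle for harmonic functions on the compact box $[0,1]^{d_i}$ to push a maximizer to the boundary, and then recurses over the faces and edges of the hypercube until it reaches a vertex. Your coordinate-ascent argument (round each coordinate of a given best response to an endpoint of $[0,1]$, one at a time, using that the payoff is affine in that coordinate and hence cannot decrease under the rounding) performs that same descent through the faces directly, with no appeal to potential theory; it is the standard argument for multilinear functions on boxes and is arguably cleaner. You also supply something the paper leaves implicit: its proof never actually argues the second assertion about the nonexistence of strict Nash equilibria off the vertices, whereas your zero-slope/indifference argument establishes it explicitly---if some $u^{i*}_j\in(0,1)$ at an equilibrium, affineness forces the slope in that coordinate to vanish, so the best response is non-unique and the equilibrium is not strict. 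Finally, your check that $\mathcal{X}_{\{r_i^+<R\}}$ carries no hidden dependence on $u^i$ (it is determined by $r_i$ and $x_i$ alone via \eqref{costate_rep}) is exactly the point that needs verifying for the multilinearity claim, and it holds.
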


\begin{proof}
We calculate the first and second partial derivatives of $J_i$:
\begin{equation*}
  \frac{\partial J_i}{\partial u^i_j}=u^j_i\left[s_{ij}+G_i(x_i-1)\mathcal{X}_{\{r_i^+<R\}}p^cx_j\mathcal{X}_{\{r_j<R\}}\prod_{k \in N_i\setminus\{j\}}(1-u^i_ku^k_ip^cx_k\mathcal{X}_{\{r_k<R\}})\right]
\end{equation*}

\begin{equation*}
    \frac{\partial^2 J_i}{(\partial u^i_j)^2}=0
\end{equation*}
for all $j \in N_i$, so:
\begin{equation}
  \nabla^2J_i=0
\end{equation}
and thus $J_i$ is a harmonic function. So, form the maximum principle for harmonic functions on compact sets (\cite{renardy} chapter 4) we conclude that the local maxima of $J_i$ with respect to $u_i$ are on the boundary of $[0,1]^{d_i}$. Applying successively the maximum principle for the faces and the edges of the hypercube $[0,1]^{d_i}$, observing that $J_i$ is still harmonic on each face of the hypercube with respect to the free variables on that face (the $u^i_j$ that are not fixed to $0$ or $1$), we conclude that the best response of each agent always contains a point in $\{0,1\}^{d_i}$.
\end{proof}

\begin{remark}
  If agent $i$ is infected, $x_i=1$ and $r_i<R$, then $J_i=\sum_{j\in N_i}s_{ij}u^i_ju^j_i-G_i$ and if she has been recovered, $r_i=R$, it is assumed that she cannot get infected again. So, in these cases, an optimal strategy for her is $u^i_j=1$, $\forall j \in N_i$, since if $u^j_i=1 \implies u^i_j=1$ and if  $u^j_i=0$ she is indifferent so she can also choose  $u^i_j=1$.
\end{remark}

\begin{remark}
  If agent $i$ and agent $j$ are neighbors and agent $i$ is not infected ($x_i=0$) and agent $j$ is not infected ($x_j=0$) or recovered ($r_j=R$) the optimal strategies for their interaction are $u^i_j=1$ and $u^j_i=1$, since if $u^j_i=1$:  $J_i(u^i_j=1)-J_i(u^i_j=0)=s_{ij}>0$ and if $u^i_j=1$: $J_j(u^j_i=1)-J_j(u^j_i=0)=s_{ji}>0$ .
\end{remark}
So defining the following sets:

\begin{equation}\label{infected}
  \textrm{Infected}_i=\{j\in N_i: x_j=1, r_j<R\}
\end{equation}
and $|\textrm{Infected}_i|$ is the number of elements of $\textrm{Infected}_i$, we conclude that:

\begin{equation*}
  J_i=J_i(u^i_j: j\in  \textrm{Infected}_i),
\end{equation*}
since the rest strategies are fixed. In this case, the computation of the equilibrium strategies is a single objective, multi-variable, integer optimization problem for each agent, which can be solved easily using the following algorithm for each agent in $O(|\textrm{Infected}_i|(\log(|\textrm{Infected}_i|)+1))$ iterations:

\begin{algorithm}[H]\label{algorithm1}
\SetAlgoLined
\KwResult{The optimal strategies $(u^i_j)^*$ for $j\in  \textrm{Infected}_i$}
 Sort the parameters $s_{ij}$, $j\in  \textrm{Infected}_i$ in decreasing order\;
 Define the sequence of indices $j_1...j_{| \textrm{Infected}_i|}$ to be the j-indices of the previous ordering\;
 Define the strategies $\bar{u^i_0}=\mathbb{0}_{ \textrm{Infected}_i}$, $\bar{u^i}_k=\{u^i_{j_1}=1...u^i_{j_k}=1, u^i_{j_{k+1}}=0... u^i_{j_{| \textrm{Infected}_i|}=0}\}$, $k=1...| \textrm{Infected}_i|$\;
 $k=0$\;
 $\Delta J_i =1$\;

 \While{$\Delta J_i>0$ and $k \leq | \textrm{Infected}_i|$}{
  $\Delta J_i= s_{ij_k}-G_ip^c(1-p^c)^{k}$\;
  $k=k+1$\;
 }

 $(u^i_j)^*=\bar{u^i}_{k-1}(j_k=j)$\;
 \caption{Solution of the optimization problem for each agent}
\end{algorithm}

\begin{remark}
  The strategy profile $u^i_j=\max\{x_i, 1-x_j\}$ is a Nash equilibrium for the game with perfect local state feedback \eqref{perfect_local_state_feedback_rep}, if $\forall i \not \in \cup_i  \textrm{Infected}_i : \max\{s_{ij}: j\in  \textrm{Infected}_i\}<G_ip^c$
\end{remark}

This equilibrium shows the phenomenon that in the case the agents are highly vulnerable to the disease and they know the state of their neighbors, they communicate with all the healthy ones in order to maximize their payoffs and the infected try to communicate also with their neighbors for the same reason but they are banned by them. So, this equilibrium results to higher payoffs for the non infected agents:
\begin{equation}
J_i= \left\{
\begin{array}{ll}
      \sum_{j \in N_i}s_{ij}(1-x_j\mathcal{X}_{\{r_j<R\}}) &, x_i=0 \quad \textrm{or} \quad r_i=R  \\
      -G_i &, x_i=1 \quad \textrm{and} \quad r_i<R\\
\end{array}
\right.
\end{equation}

\section{Information for the distribution of the states}\label{s.distr_info}

The second case that we study is the case where the agents have statistical information for the distribution of the states, which in our case is a Bernoulli distribution, assuming that the agents ignore the correlations between their states. We assume also that all the agents know the same distribution with the same parameters and that they have no memory for the past values of these parameters:

\begin{equation}\label{info_distr_rep}
  I_i^D(k)=\{p_x(k), p_r(k)\},
\end{equation}
where

\begin{equation}\label{px_k}
  p_x(k)=\frac{|\{i:1\leq r_i(k)<R\}|}{N},
\end{equation}
is the percentage of ill agents at day $k$ and

\begin{equation}\label{px_k}
  p_r(k)=\frac{|\{i: r_i(k)=R\}|}{N},
\end{equation}
is the percentage of recovered agents at day $k$.\par
Furthermore, we assume that each agent chooses the same probability to meet each one of her neighbors and then makes $d_i$ random experiments to decide if she will meet each one of them.

\begin{equation}\label{random_str_rep}
  u^i_j(k)= \left\{
\begin{array}{ll}
      1 &, \textrm{w.p.}\quad p^i_u(k)\\
      0 &, \textrm{otherwise}
\end{array}
\right.
\end{equation}
this is rational only if the utility earned from each interaction is the same from all the neighbors of each agent: $s_{ij}=s_i$, $\forall j\in N_i$. We assume that this symmetry holds for this case. Consequently, the strategy space of each agent is:

\begin{equation}\label{1_dim_str_space_rep}
  p^i_u(k) \in [0,1].
\end{equation}
We then drop $k$ in order to proceed with the analysis of one step of the game. In order to study the equilibria of this game we have firstly to compute the expectation of the state of the agents based on the available information \eqref{info_distr_rep}. Thus, we compute at first the expectation of the next state of an agent given the current states:

\begin{equation*}
  E\{x_i^+|x, r\}=1-(1-x_i)\prod_{j\in N_i}(1-u^i_ju^j_ip^cx_j\mathcal{X}_{(r_j<R)}),
\end{equation*}
next we compute the expectation of the previous conditional expectation over all the states:

\begin{equation*}
 E_{x, r}\big\{ E\{x_i^+|x, r\}\big\}=1-(1-p_x)\prod_{j\in N_i}(1-u^i_ju^j_ip^cp_x(1-p_r)),
\end{equation*}
and thus the criteria have the following form:

\begin{equation*}
  J_i=s_i\sum_{j\in N_i}u^i_ju^j_i+\left[G_i(1-p_x)\prod_{j\in N_i}(1-u^i_ju^j_ip^cp_x(1-p_r))-G_i\right](1-p_r),
\end{equation*}
where the strategies are random and uniform for all the neighbors of an agent according to eq.\eqref{random_str_rep}, so we have to compute the expected criteria, given the probabilities of the uniform strategies:

\begin{equation}
  \hat{J}_i=E\{J_i|p^i_u,p^j_u,j\in N_i\}=s_ip^i_u\sum_{j\in N_i}p^j_u+\left[G_i(1-p_x^0)\prod_{j\in N_i}(1-p^i_up^j_up^cp_x^0)-G_i\right](1-p_r)
\end{equation}
Each agent wants to maximize $\hat{J}_i$ w.r.t. $p^i_u$. However, computing the first two derivatives of $\hat{J}_i$ w.r.t. $p^i_u$ we get:

\begin{equation*}
  \frac{\partial \hat{J}_i}{\partial p^i_u}=s_i\sum_{j\in N_i}p^j_u-G_i(1-p_x)(1-p_r)\sum_{j\in N_i}p^j_up^cp_x(1-p_r)\prod_{k\in N_i\setminus \{j\}}(1-p^i_up^k_up^cp_x(1-p_r)),
\end{equation*}
and

\begin{equation*}
  \frac{\partial^2 \hat{J}_i}{(\partial p^i_u)^2}=G_i(1-p_x)(1-p_r)\sum_{j\in N_i}p^j_up^cp_x(1-p_r)\sum_{k\in N_i\setminus \{j\}}p^k_up^cp_x(1-p_r)\prod_{l\in N_i\setminus \{j,k\}}(1-p^i_up^l_up^cp_x(1-p_r))\geq 0,
\end{equation*}
which indicate that each $\hat{J}_i$ is convex with respect to $p^i_u$, independently of the strategies $p^j_u$ of the other agents, so the possible equilibria are in $\{0,1\}^N$. In order to characterize the Nash equilibria of this game we observe that it is strategically equivalent to the following one:

\begin{equation}
  \tilde{J}_i(p_u^i, p_u^{-i})=a_ip_u^i\sum_{j \in N_i}p_u^j+\prod_{j \in N_i}(1-bp_u^ip_u^j),
\end{equation}
where:

\begin{align*}
  a_i = \frac{s_i}{G_i(1-p_x)(1-p_r)}, \quad b = p^cp_x(1-p_r),
\end{align*}
and

\begin{align*}
  p_u^i & \in \{0,1\}, \quad \forall i.
\end{align*}

We proceed with the calculation of the best response for each agent. If agent $i$ has $m_i$ of her neighbors playing $p_u^j=1$ her payoff is:
\begin{equation*}
  \tilde{J}_i(p_u^i,m_i)=a_im_ip_u^i+(1-bp_u^i)^{m_i}.
\end{equation*}
Thus:
\begin{align*}
  \tilde{J}_i(0,m_i) & = 1,\\
  \tilde{J}_i(1,m_i) & = a_im_i+(1-b)^{m_i}.
\end{align*}
We define the following functions:
\begin{align}\label{f1}
  f_i(m) = \tilde{J}_i(1,m) = a_im+(1-b)^m = a_im+e^{m \ln(1-b)}
\end{align}
The best response of each agent is:
\begin{equation}\label{BR_i}
  BR_i(m_i)= \left\{
  \begin{array}{ll}
      1 &, \textrm{if} \quad f_i(m_i)>1\\
      0 &, \textrm{otherwise} \\
  \end{array}
  \right.
\end{equation}
So, we have the following algorithm for the computation of the strategies corresponding to a Nash equilibrium:

\begin{algorithm}[H]\label{algorithm2}
\SetAlgoLined
\KwResult{The optimal strategies $p_u^{i*}$}
 Set $p_u^i=1$, $\forall i$ \\
 Compute $f_i(m_i)$, $\forall i$ ($m_i=d_i$)

\While{$\exists f_i(m_i)\leq1$}{
  \If{$f_i(m_i)\leq1$}{
    Set $p_u^i=0$
  }
  Compute new $m_i$, $\forall i$
  Compute new $f_i(m_i)$, $\forall i$
}

\caption{Computation of the NE strategies for the game with information for the distribution of the states }
\end{algorithm}

\begin{proposition}
    There exists a Nash equilibrium of the game with statistical information for the distribution of the states. Furthermore, Algorithm \ref{algorithm2} converges to the Nash equilibrium in $\mathcal{O}(N^2)$ steps.
\end{proposition}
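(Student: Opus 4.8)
The plan is to prove both assertions simultaneously: I will show that Algorithm \ref{algorithm2}, started from the all-ones profile, generates a monotonically decreasing sequence of profiles in $\{0,1\}^N$ that terminates at a Nash equilibrium. Since the algorithm always halts at such a profile, existence comes for free as a byproduct of the convergence argument. The engine behind everything is one structural fact: the best response $BR_i$ from \eqref{BR_i} has a \emph{threshold} (monotone) form in the number of active neighbors.

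First I would establish this threshold property. The function $f_i(m) = a_im + (1-b)^m$ of \eqref{f1} satisfies $f_i(0) = 1$, is strictly convex in the real variable $m$ (its second derivative equals $(\ln(1-b))^2(1-b)^m > 0$ for $0 < b < 1$), and tends to $+\infty$ as $m \to \infty$ because $a_i > 0$. Setting $g_i(m) = f_i(m) - 1$, convexity together with $g_i(0) = 0$ and $g_i(\infty) = +\infty$ forces the super-level set $\{m \geq 0 : g_i(m) > 0\}$ to be an up-set: a strictly convex function meets the level $0$ at most twice, one of those points being $m = 0$, so $\{m > 0 : f_i(m) > 1\}$ is an interval $(\bar m_i, \infty)$, degenerating to $(0,\infty)$ when $f_i$ is increasing. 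Restricting to integers yields a threshold $\tau_i \geq 1$ with $f_i(m) > 1 \iff m \geq \tau_i$, so by \eqref{BR_i} the best response equals $1$ exactly when $m \geq \tau_i$ and is therefore nondecreasing in the number of active neighbors (the supermodular / coordination structure of the game).

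Next I would exploit this monotonicity to analyze the algorithm. Since it begins at the maximal profile and only ever switches coordinates from $1$ to $0$, the generated profiles are nonincreasing in the componentwise order, whence every neighbor count $m_i$ is nonincreasing across iterations. I would then argue that a deactivated agent never wishes to reactivate: if agent $i$ is set to $0$ because $f_i(m_i) \leq 1$, i.e. $m_i < \tau_i$, then as $m_i$ can only decrease afterwards the inequality persists and $BR_i(m_i) = 0$ throughout. Hence, when the loop condition fails, every active agent has $f_i(m_i) > 1$ (so $BR_i(m_i) = 1$) and every inactive agent has $f_i(m_i) \leq 1$ (so $BR_i(m_i) = 0$); no agent can profitably deviate, and the terminal profile is a Nash equilibrium. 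Together with termination this also settles existence.

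Finally, for termination and complexity I would use a monotone potential: each iteration deactivates at least one previously active agent, so $\sum_i p_u^i$ strictly decreases by at least one; starting from $N$ and bounded below by $0$, this caps the number of iterations at $N$. Each iteration costs $O(N)$ to rescan the agents and refresh the counts $m_i$ and the values $f_i(m_i)$, giving the claimed $\mathcal{O}(N^2)$ bound (reading the loop guard as concerning currently active agents, which is what makes the descent well-founded). The one delicate point — and the main obstacle — is the threshold lemma of the first step: because $f_i$ need \emph{not} be globally monotone (it can dip below $1$ for intermediate $m$ when $a_i$ is small), monotonicity of $BR_i$ is not automatic and must be extracted from strict convexity plus the anchoring identity $f_i(0) = 1$. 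This is exactly what guarantees that deactivated agents stay deactivated and that the monotone descent — hence both convergence and existence — is valid.
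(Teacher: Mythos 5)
Your proposal is correct and follows essentially the same route as the paper: the threshold property of $f_i$ (convexity plus $f_i(0)=1$, which is the paper's Lemma~\ref{f_prop}), the resulting monotone descent in which each agent switches from $1$ to $0$ at most once, termination in at most $N$ iterations of cost $O(N)$ each, and the observation that the terminal profile is a profile of mutual best responses. Your explicit remark that the loop guard must be read as ranging over currently active agents is a point the paper leaves implicit, but it does not change the argument.
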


\begin{proof}
To prove this proposition we firstly prove the following lemma:
\begin{lemma}\label{f_prop}
  For the functions $f_i(m)$, defined in \eqref{f1}, there exists a unique $m_0 \in \mathbb{R}_+$ such that $f(m_0)=1$ and for all $ m>m_0$, $m\in \mathbb{N}$ : $f(m)>1$
\end{lemma}

\begin{proof}
  It is easily observed that $f_i(m)$ is convex and $f_i(0)=1$ for each $i$.
  So, if $f'_i(0) \geq 0 \Rightarrow f_i(m)>1$, $\forall m$, in this case $m_0=0$.
  Else if $f'_i(0) \leq 0 \Rightarrow \exists ! m_0 \in \mathbb{R}_{+}^{*}$ : $f(m_0)=1$ and $\forall m>m_0$, $m\in \mathbb{N}$ : $f(m)>1$ due to the convexity of $f_i(m)$.
\end{proof}

Due to this lemma, beginning with the maximum feasible value for $m_i$ (which is $d_i$) the changes in the agents strategies from $1$ to $0$ can result only in the decrease of their neighbors $m_j$'s and thus it is possible to happen only one change of strategy ($1\rightarrow 0$) for each agent until $f_i(m_i)\geq 1$, $\forall i$, so the algorithm converges. Moreover, due to this observation, in the worst case the `while-loop' will run $N$ times and so the algorithm will converge in $\mathcal{O}(N^2)$ steps.
\newline
The point that the algorithm converges is a Nash equilibrium of the game, since the agents actions are their best responses to their active contacts numbers $m_i$'s and for this profile of $m_i$'s no agent will be benefited from a unilateral deviation from her action.\newline
Furthermore, we should point that, since the algorithm is in fact a descent on the possible $m_i$-profiles, i.e. it initializes with all the contacts being active ($m_i=d_i$, $\forall i$) and each $m_i$ decreases or stays the same, the Nash equilibrium that the algorithm converges is the one corresponding to the maximum possible sociability for the agents.
\end{proof}

\begin{remark}
  If for each agent $i$ it holds that $s_id_i+G_i(1-p_x)(1-p_r)[(1-p^cp_x(1-p_r))^{d_i}-1]>0$ then the strategy profile $p^i_u=1, \forall i$ is a N.E. of that game.
\end{remark}

\begin{proposition}
  The strategy profile $p^i_u=0, \forall i$ is again a N.E., since it results to indifference between the unilateral changes of each agent.
\end{proposition}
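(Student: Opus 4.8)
The plan is to verify the Nash equilibrium condition directly, by substituting the candidate all-zero profile into a generic agent's payoff and showing that her payoff becomes independent of her own action once all her neighbors are inactive. I would carry this out using the strategically equivalent payoff
\[
\tilde{J}_i(p_u^i, p_u^{-i}) = a_i p_u^i \sum_{j \in N_i} p_u^j + \prod_{j \in N_i}\left(1 - b\, p_u^i p_u^j\right),
\]
though the identical computation goes through on the original expected payoff $\hat{J}_i$.

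First, fix an arbitrary agent $i$ and suppose every other agent plays $p_u^j = 0$; in particular $p_u^j = 0$ for all $j \in N_i$. Substituting, the benefit term becomes $a_i p_u^i \sum_{j \in N_i} 0 = 0$ irrespective of $p_u^i$, while every factor of the cost term reads $1 - b\, p_u^i \cdot 0 = 1$, so the product equals $1$, again irrespective of $p_u^i$. Hence $\tilde{J}_i(0, p_u^{-i}) = \tilde{J}_i(1, p_u^{-i}) = 1$, so agent $i$ is indifferent between her two feasible actions. Consequently $p_u^i = 0$ is (weakly) a best response to $p_u^{-i} = \mathbb{0}$, and since $i$ is arbitrary, every agent's action is a best response to the rest, which is exactly the Nash equilibrium condition. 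The equilibrium is non-strict precisely because of the indifference just exhibited, matching the proposition's phrasing.

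There is no real obstacle here: the entire content is the observation that both payoff terms are annihilated by the neighbors' zero actions --- the benefit term through the factor $\sum_{j \in N_i} p_u^j$ and the cost term through each multiplicand $(1 - b\, p_u^i p_u^j)$ collapsing to $1$. The only point I would be careful to state is that this yields a weak (indifference) equilibrium rather than a strict one, exactly as in the earlier all-zero equilibrium established under perfect local information.
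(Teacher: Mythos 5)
Your proof is correct and matches the paper's (one-line) justification exactly: the paper asserts the all-zero profile is a Nash equilibrium because it induces indifference under unilateral deviation, and your computation simply makes explicit that with all neighbors at $0$ both the benefit term and the cost product of $\tilde{J}_i$ become independent of $p_u^i$. No discrepancy to report.
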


\section{Numerical studies}\label{simulation_parameters}

In this section we present several simulations for the social distancing games under the two different information structures in order to compare the disease prevalence and the agents payoffs in both cases, as well as the importance of some parameters of the model. For these simulations we consider a repeated version of this game. The payoffs of the agents in this case have the form \eqref{payoffs}, indicating the myopic behavior for the agents, who cannot predict the future consequences of their actions. The strategies considered in the following simulations are the Nash Equilibrium strategies of the static games of the previous sections repeated at each step of each game. The following two remarks describe these strategies.\par
For the game with perfect local information we consider the following strategies:

\begin{remark}
\begin{enumerate}
  \item The strategy profile $u(k)=\mathbb{0}_{\sum_{i=1}^{N}d_i}$, $k=1...K$.
  \item The strategy profile $u^*(k)$, $k=1...K$:
\begin{equation*}
u^i_j(k)^*= \left\{
\begin{array}{ll}
      1, \quad \textrm{if}\quad x_i(k)=1 \quad \textrm{or} \quad\{ x_i(k)=0 \quad \textrm{and} \quad (r_j(k)=0 \quad \textrm{or} \quad r_j(k)=0)\}\\
      \textrm{The solution of algorithm \ref{algorithm1}}, \quad \textrm{if}\quad x_i(k)=0 \quad \textrm{and} \quad 1\leq r_j(k)<R\\
\end{array}
\right.
\end{equation*}
In the execution of algorithm \ref{algorithm1} in this equation, the set $ \textrm{Infected}_i$ is defined as follows:
\begin{equation*}
  \textrm{Infected}_i=\{j \in N_i: 1\leq r_j(k)<R\}
\end{equation*}
  \item Any step-wise interchange of the two previous strategy profiles.
\end{enumerate}
\end{remark}

and for the game with statistical information we consider the following strategies:
\begin{remark}
\begin{enumerate}
  \item The strategy profile $u(k)=\mathbb{0}_{\sum_{i=1}^{N}d_i}$, $k=1...K$.
  \item The strategy profile $u^*(k)$, $k=1...K$: The solution of Algorithm \ref{algorithm2}, where $p_x^0=p_x(k)$ follows the rule \eqref{px_k}.
  \item Any stepwise interchange of the two previous strategy profiles.
\end{enumerate}
\end{remark}

In practice, since both algorithms are initialized with all the social contacts being active, they converge in the second category of strategies for both cases.\par
The simulations presented here have the following parameters. The underlying graph topology is a random graph \cite{random} with $N=1000$ agents and adjacency probability $p=1\%$, and thus average degree $\bar{d}=10$. The recovery period is 14 days. The sociability parameters $s_{ij}$ are random numbers in $(0,1)$. The agents are divided into two groups the vulnerable and the non-vulnerable. For the vulnerable $G_i=10000$ and for the non-vulnerable $G_i=1000$. The percentage of the vulnerable in the community is $20\%$. The initial percentage of infected agents is $1\%$. The basic reproduction number of the disease $R_0=2$, so as for the disease to be epidemic and for social distancing to be necessary.\par
In Figure \ref{fig:distr_info_3_curves} we indicate the effects of the social distancing games with statistical information and with perfect local information on the disease prevalence and on the sociability of the agents and compare these two games. In Table \ref{table:4} we present some numerical characteristics of these curves.

\begin{figure}[h!]
\centering
  \includegraphics[width=\linewidth,height=8cm]{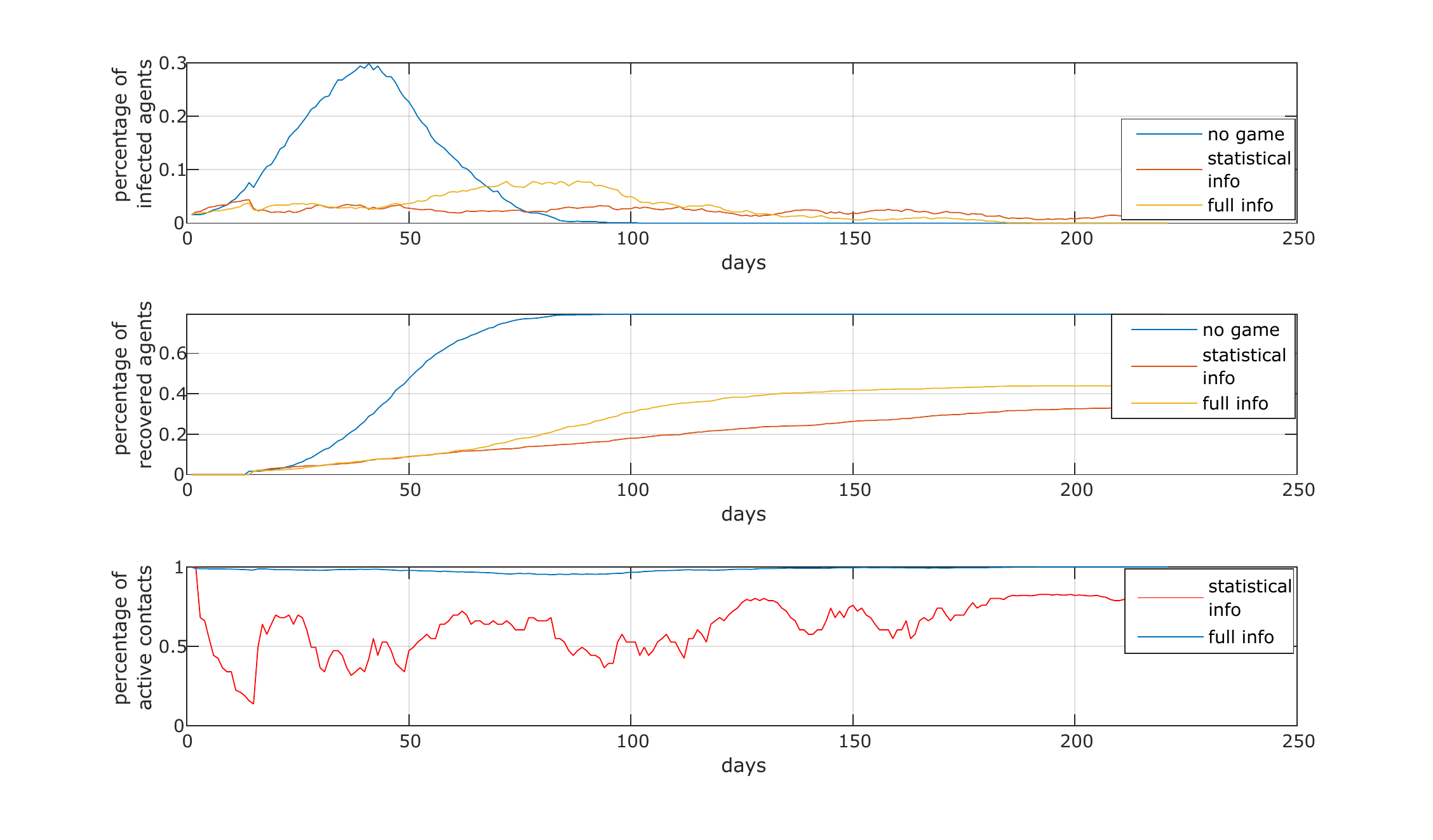}
  \caption{Infection, recovery and sociability curves for the game with information for the distribution of the states}
  \label{fig:distr_info_3_curves}
\end{figure}

\begin{table}[h!]
\centering
\begin{tabular}{ |p{5cm}||p{3cm}|p{3cm}|p{4cm}|  }
 \hline
 \multicolumn{4}{|c|}{Characteristics of Infection and Sociability for the Games with Different Information} \\
 \hline
   &  Infection Peak & Total Infection & Minimum Sociability \\
 \hline
 No game  & 30.3\%  & 78.7\% & 100\%\\
 Perfect local feedback   & 9.1\% & 43.7\% &  94.2\% \\
 Information for the distribution of the infected   & 4.2\% & 33.5\% & 22.9\%\\
 \hline
\end{tabular}
\caption{Table to compare the infection and sociability for the games with different information}
\label{table:4}
\end{table}

We conclude that the game with information for the distribution of the states almost extinguish the epidemic behavior of the disease - the infection curve is similar with the infection curve of a disease with $R_0 \leq 1$ - while the perfect local state feedback game reduces significantly the disease outspread, but it does not extinguish the epidemic. However, this result comes at a high cost for the agents, since in presence of an augmented fear of infection due to lack of information about the health status of their friends, they avoid many of their social interactions. In comparison with the perfect local state feedback repeated game the social distancing is large, there exist only $22.9\%$ active social interactions, than $94.2\%$ in the other case. This remarkable difference on the agents behavior affects significantly their payoffs. As we observe in Figure \ref{fig:payoffs_comparison}, the average payoff of the game with perfect local information are much higher than the average payoff of the game with statistical information. Moreover, we must underline that for the vulnerable agents the difference of their payoffs between the two games is large, as they pay much higher the cost of not being informed about the health state of their contacts and get infected. \par

\begin{figure}[h!]
\centering
  \includegraphics[width=\linewidth,height=8cm]{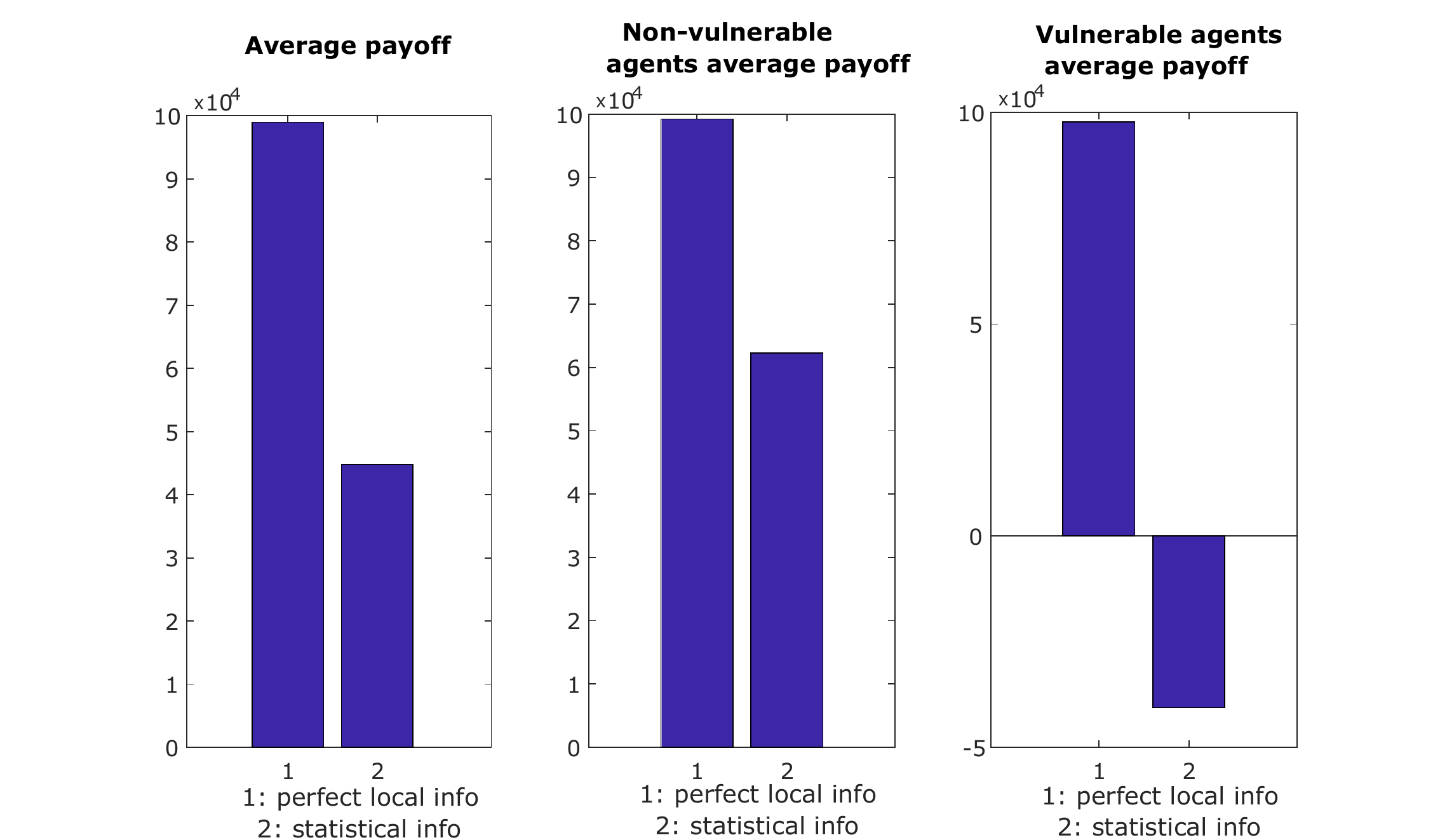}
  \caption{Comparison of the payoffs of the agents for the two games}
  \label{fig:payoffs_comparison}
\end{figure}

Due to that fact the vulnerable agents can be considered as key players for these games, since they tend to play conservatively and thus enhance the social distancing. So, in Figure \ref{fig:vulnerables_full_info} we show the effect of the percentage of vulnerable agents in the community to the infection peak and to the total number of infected agents for the game with perfect local information and in Figure \ref{fig:vulnerables_distr_info} we show the same effects for the game with statistical information. In these figures the mean value of $30$ simulations is depicted at each point of the plots and the red lines are the linear regression curves for our experiments on different percentages.

\begin{figure}[h!]
\centering
  \includegraphics[width=\linewidth,height=8cm]{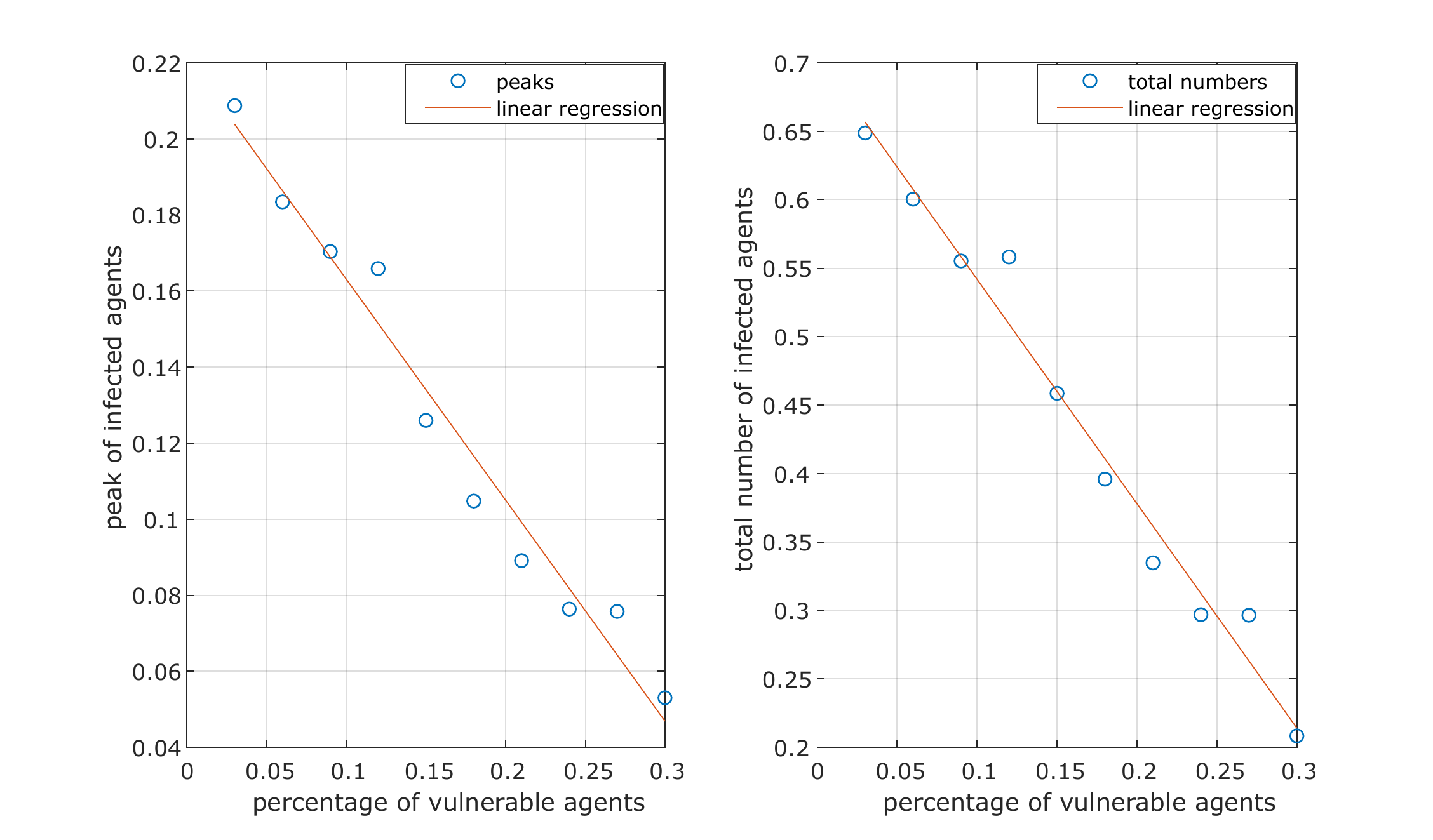}
  \caption{Correlation of the percentage of vulnerable agents with the infection outspread for the perfect local feedback information game}
  \label{fig:vulnerables_full_info}
\end{figure}

\begin{figure}[h!]
\centering
  \includegraphics[width=\linewidth,height=8cm]{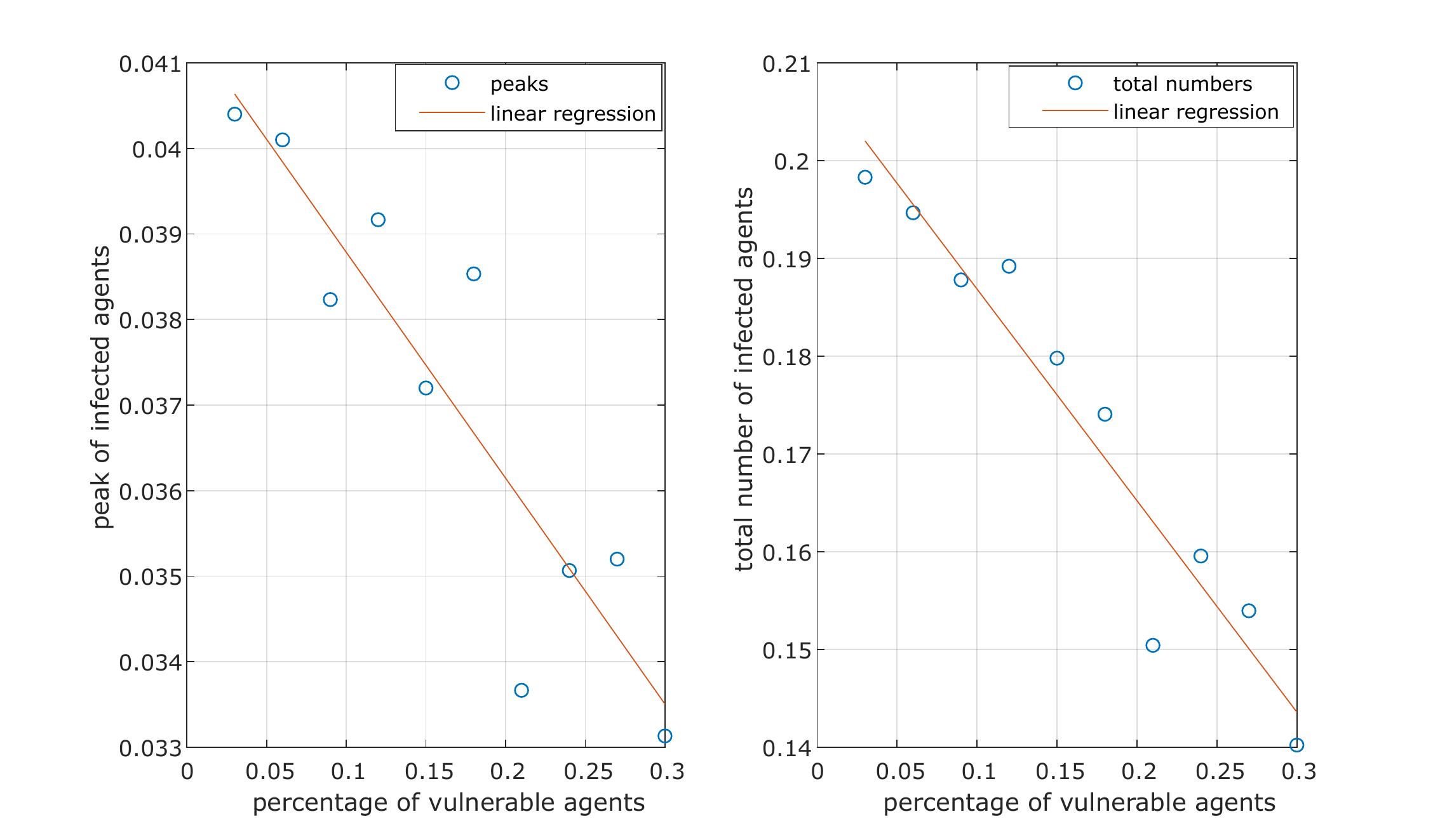}
  \caption{Correlation of the percentage of vulnerable agents with the infection outspread for the game with information for the distribution of the states}
  \label{fig:vulnerables_distr_info}
\end{figure}

The result is the expected one, that in both games the percentage of the vulnerable agents is negatively correlated to the disease outspread. However, in the game with statistical information (Figure \ref{fig:vulnerables_distr_info}) their role seem to be less important than in the perfect local information game, because the variations of the peaks and the total number of infected agents with respect to the vulnerable agents percentage are by far smaller.

\section{Case Studies and Discussion}\label{discussion}

In the previous sections, we have analyzed and compared the Nash equilibrium strategies of the agents for the cases of perfect local state feedback information and statistical information for the distribution of the states. In this section, we consider several variations of the initial problem and examine, through simulations, the effects of the varying parameters on the behavior of the agents and on the outspread of the epidemic. The first variation  we study concerns the quality of the available information, for the case that the agents possess  statistical information for the distribution of the states. The second variation considers the risk perception, modeled by the vulnerability parameters, to depend on the infection outspread and the capacity of health care system. And the third variation takes under consideration the effects of the graph topology.

\subsection{Fake information for the distribution of the states}
A first modified scenario we examine is the case that the information the agents possess about the distribution of the states is fake or biased. This is an interesting and in some cases realistic scenario, since the agents are rarely able or have the time to investigate verified data about the outspread of the disease, but they usually get informed through mass media or social media. Consequently, the information they get is usually exaggerated or understated. The spread of fake news is another factor affecting the information quality and thus the decisions of the agents. Moreover, in many cases the lack of diagnostic tests in the community makes the knowledge of the accurate infection level impossible.
\vskip 0.3cm
So, we consider the following modification of the model of section \ref{s.distr_info}:
\begin{equation}\label{fake_info}
  p_x^f=f p_x
\end{equation}
where $p_x^f$ is the available fake information of the agents and $f$ is a coefficient indicating its deviation from the actual information $p_x$. So, we get the following simulations (Figure:\ref{fig:fake_info_3_curves}) indicating the effects of an overestimation of the infection level ($f=2$) and an underestimation of the infection level ($f=0.5$), in comparison with the game with actual information (all the parameters are the same with the other simulations, as in section \ref{simulation_parameters}).
\begin{figure}[h!]
\centering
  \includegraphics[width=\linewidth,height=8cm]{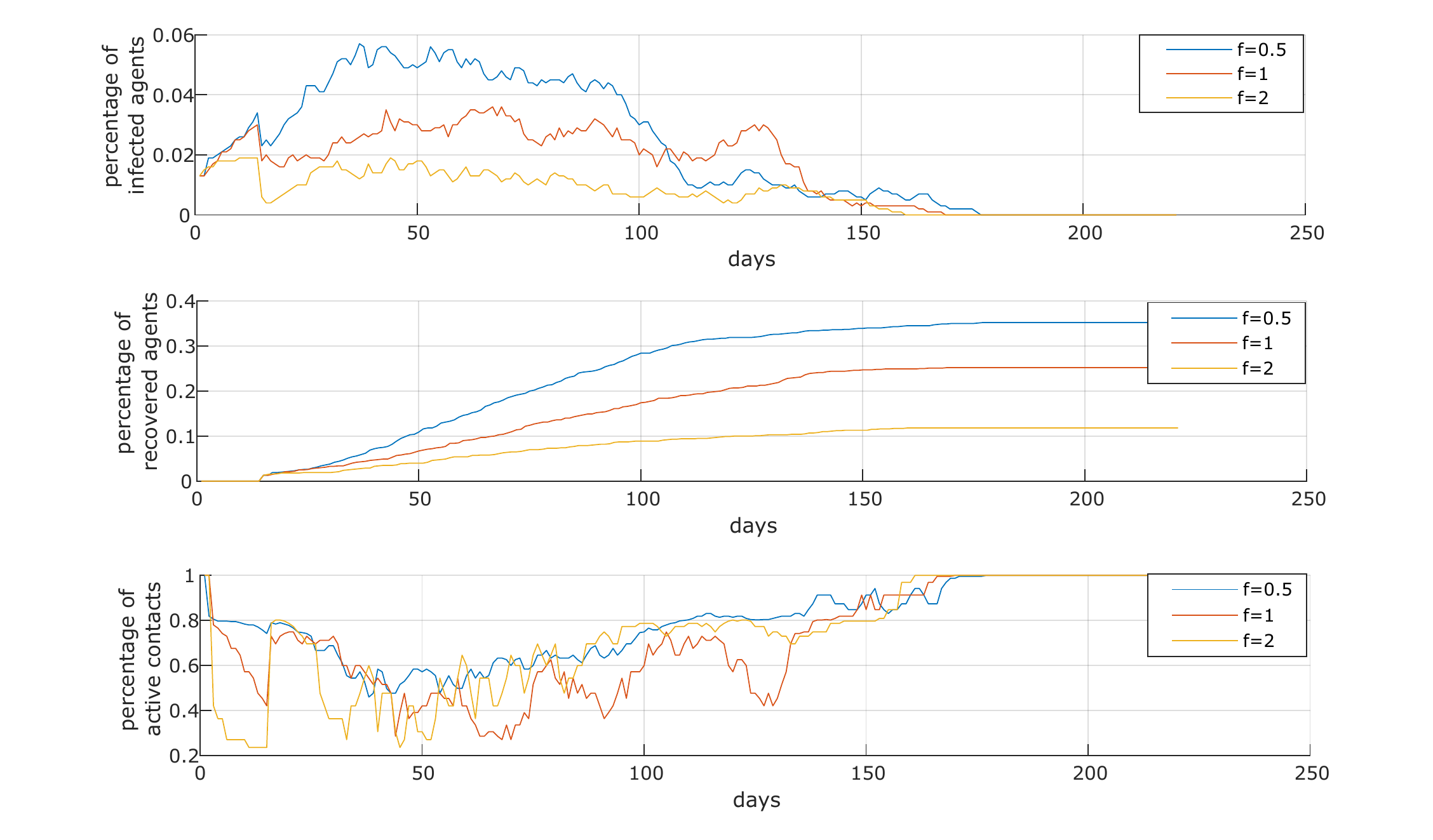}
  \caption{Infection, recovery and sociability curves for games with fake information for the distribution of the states }
  \label{fig:fake_info_3_curves}
\end{figure}
In table \ref{table:5} are presented some numerical characteristics of the previous curves:\par
\begin{table}[h!]
\centering
\begin{tabular}{ |p{6.5cm}||p{2.8cm}|p{2.8cm}|p{3.8cm}|  }
 \hline
 \multicolumn{4}{|c|}{Fake Information Consequences} \\
 \hline
   &  Infection Peak & Total Infection & Minimum Sociability\\
 \hline
 Actual Information  & 3.7\% & 25.6\% & 23.9\%\\
 Overestimation of Infection $(2p_x)$ & 1.9\% & 11.1\% & 19.5\%\\
 Underestimation of Infection $(0.5p_x)$ & 5.9\% & 36.5\% & 52.6\%\\
 \hline
\end{tabular}
\caption{Table to compare the infection for games with actual and fake information}
\label{table:5}
\end{table}
We observe that in the case of an overestimation of the infection level the agents care more to follow social distancing and the disease prevalence is kept at low levels, while in the case of underestimation of the infection the agents do not care so much and the disease prevalence is higher. It seems rational that the agents are more benefited from a low prevalence so it may be profitable for them to receive an overestimation of the infection level. However, when applying social distancing they pay the costs of the effort to eradicate the epidemic, so it is interesting to examine how their payoffs vary when they receive fake or biased information.
\begin{figure}[h!]
\centering
  \includegraphics[width=\linewidth,height=8cm]{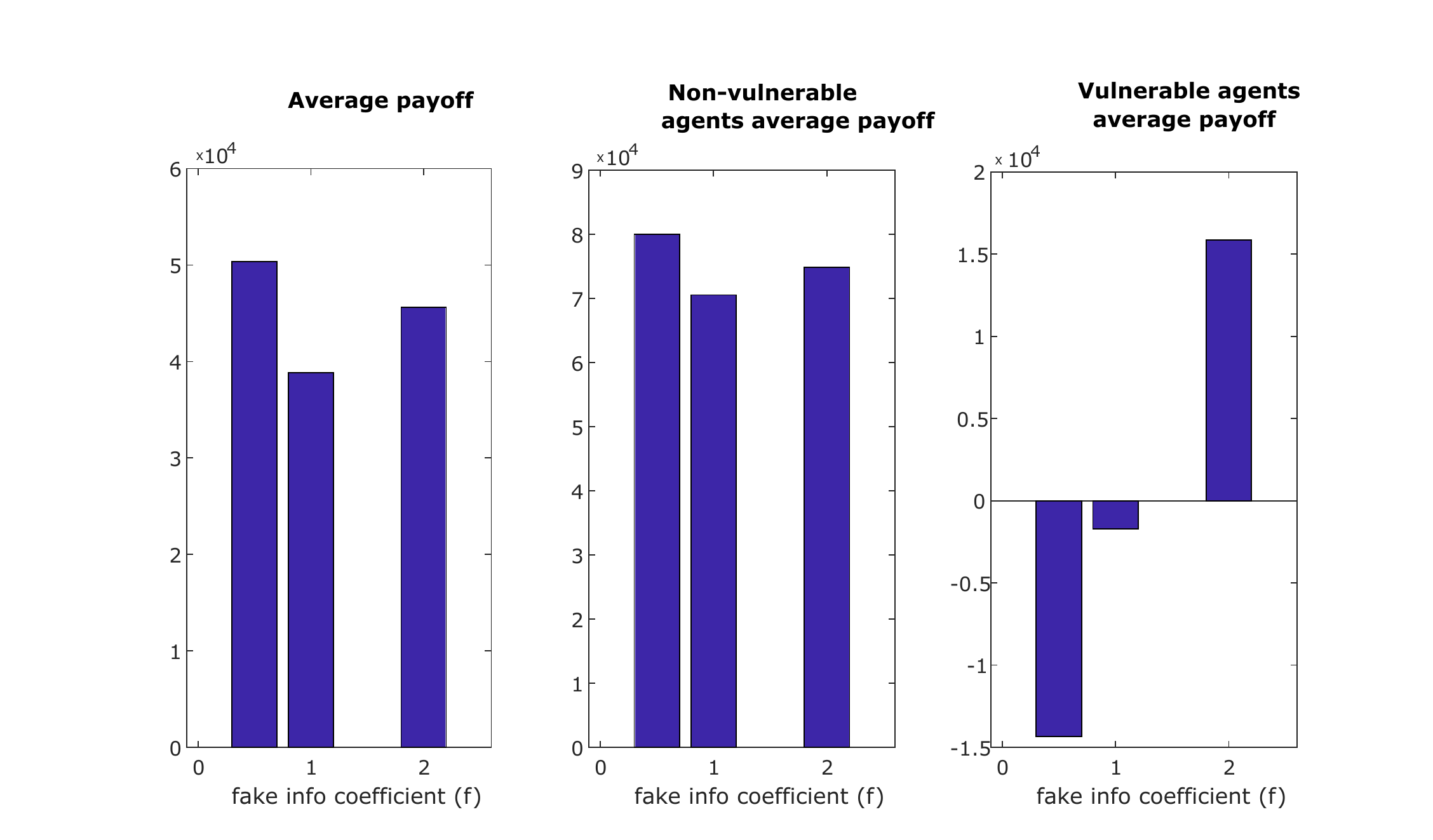}
  \caption{Payoffs of the agents in the case of fake information}
  \label{fig:fake_payoffs}
\end{figure}
From Figure \ref{fig:fake_payoffs}, we observe that the non vulnerable agents are slightly benefited from an underestimation of the infection level, since they are not precarious and their profits depend mostly on their social contacts. However, they have also small gains in the case of an overestimation of the infection level since they stay safe and not infected. Contrary to the non vulnerable agents, the vulnerable agents are damaged from an underestimation of the infection level, since they may get infected and suffer a lot and they are benefited significantly from an overestimation of the infection level, which scares the whole society and leads to strict social distancing, saving them this way from a possible infection.\par
It is also interesting to point out the correlation of the infection outspread and the agents' payoffs with the fake information coefficient $(f)$. As we can see in Figure \ref{fig:fake_info_peaks_sums} the infection outspread and the fake information coefficient have a definitely negative correlation, with overestimation leading to very low infection levels and underestimation to high infection levels. Moreover, interestingly, the agents' payoffs present a minimum average value when the information is near to the actual one and they seem to be benefited from fake or biased information. However, we should point out here that the average payoff of all the agents is illustrated in Figure \ref{fig:fake_info_peaks_sums} and this is the reason for the existence of that minimum, since, according to Figure \ref{fig:fake_payoffs}, the non vulnerable agents - who are the majority - are slightly benefited from a small $f$ and the vulnerable agents - even if they are minority - are benefited significantly form a larger $f$, so the average payoff has this behavior.

\begin{figure}[h!]
\centering
  \includegraphics[width=\linewidth,height=8cm]{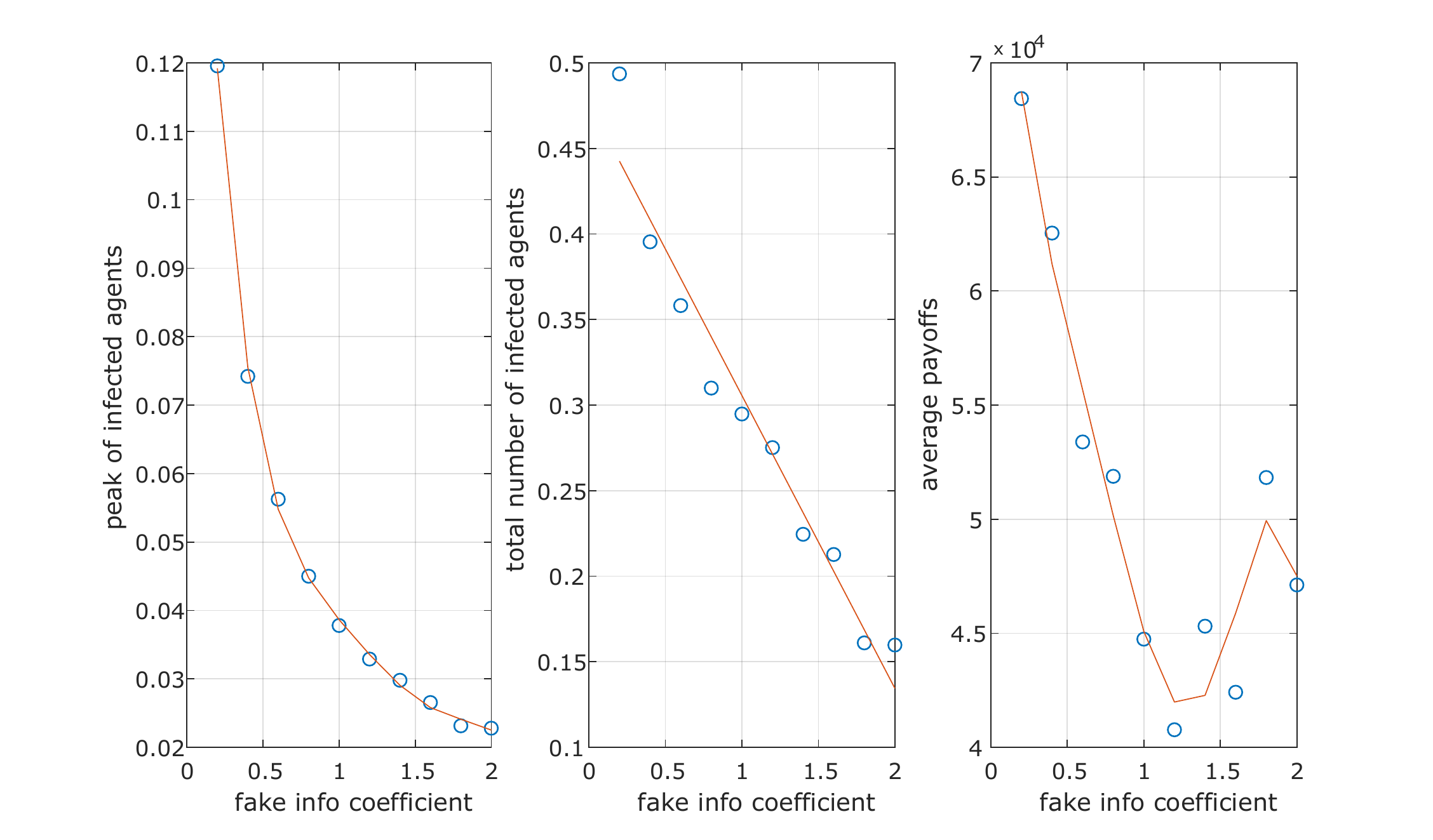}
  \caption{Correlation of the coefficient of fake information with the infection outspread and the agents' payoffs}
  \label{fig:fake_info_peaks_sums}
\end{figure}

These observations could be useful for a social planer, with an aim to avoid the spreading of the disease, in order to achieve social distancing without imposing it by law but by manipulating the agents strategies through the broadcasted fake or biased information.

\subsection{Vulnerabilities depending on the infection outspread and the health care system capacity}
Another scenario is that the vulnerability parameters of the agents ($G_i$) depend on the level of infection in the community. This is an interesting scenario in practice, since the health systems worldwide have finite (and usually small) capacity, so if the number of infected agents who need health care pass a certain level it is not probable for the next agents who will get infected to have access in the necessary facilities. \par
We model this phenomenon considering the vulnerability parameters as functions of the infection ratio, in the model of section \ref{s.distr_info} . At first, we examine the case of linear dependence:

\begin{equation}\label{G_px_prop}
  G_i=G_i(p_x)=G_i^0\alpha p_x
\end{equation}

$G_i^0$ are the constant vulnerability parameters used in all the previous simulations. Choosing $\alpha=\frac{1}{p_x^{\text{ref}}}$ we can define a reference infection level $p_x^{\text{ref}}$, where the agents will play as in the case of constant vulnerability parameters $G_i^0$. Below this level, they will be indifferent for the effects of the disease on them and care more for their social interactions and above this level they will be more worried about the disease and follow social distancing strategies.\par
This is confirmed by Figure: \ref{fig:G_px_proportional}, where all the parameters, except the vulnerability parameters, are the same with the other simulations, as in section \ref{simulation_parameters}.

\begin{figure}[h!]
\centering
  \includegraphics[width=\linewidth,height=8cm]{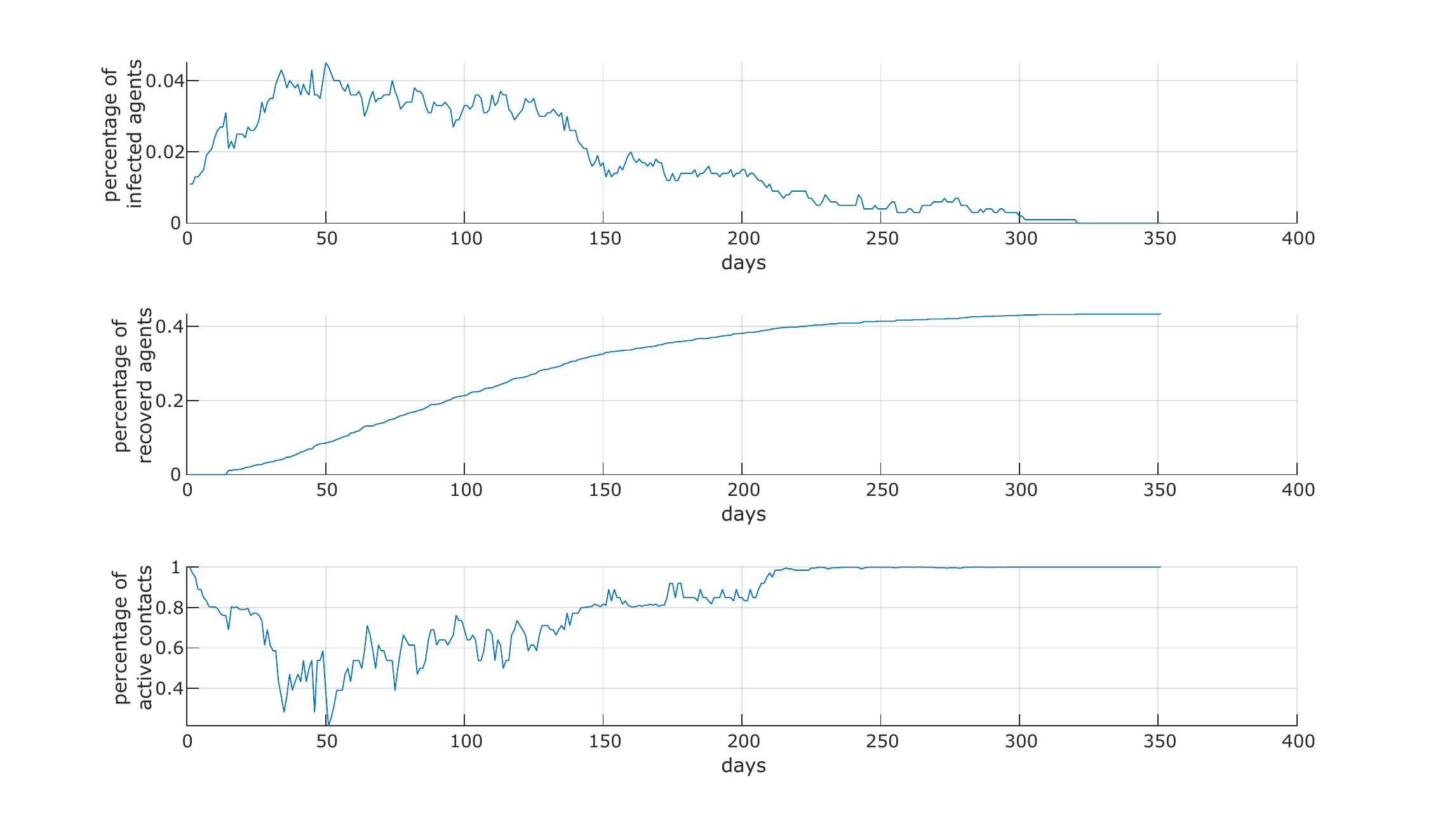}
  \caption{Infection, recovery and sociability curves when the vulnerability parameters have a proportional dependence on the infection outspread}
  \label{fig:G_px_proportional}
\end{figure}
It is interesting to point that the agents do keep the infection level below the reference value, in this simulation $p_x^{\text{ref}}=5\%$, for all the time.\par
We also examine the case of step-function dependence, where there exists a critical infection level $p_x^{\text{cr}}$ above which the agents play with parameters $G_i^0$ and below which they care very little about the effects of the disease on them.

\begin{equation}
G_i=G_i(p_x)= \left\{
\begin{array}{ll}
      G_i^0/M &, p_x<p_x^{\text{cr}} \\
      G_i^0 &, else\\
\end{array}
\right.
\end{equation}
where $M$ is a large number e.g., $M=100$. In Figure \ref{fig:G_px_step}, $p_x^{\text{cr}}=5\%$. \par
\begin{figure}[h!]
\centering
  \includegraphics[width=\linewidth,height=8cm]{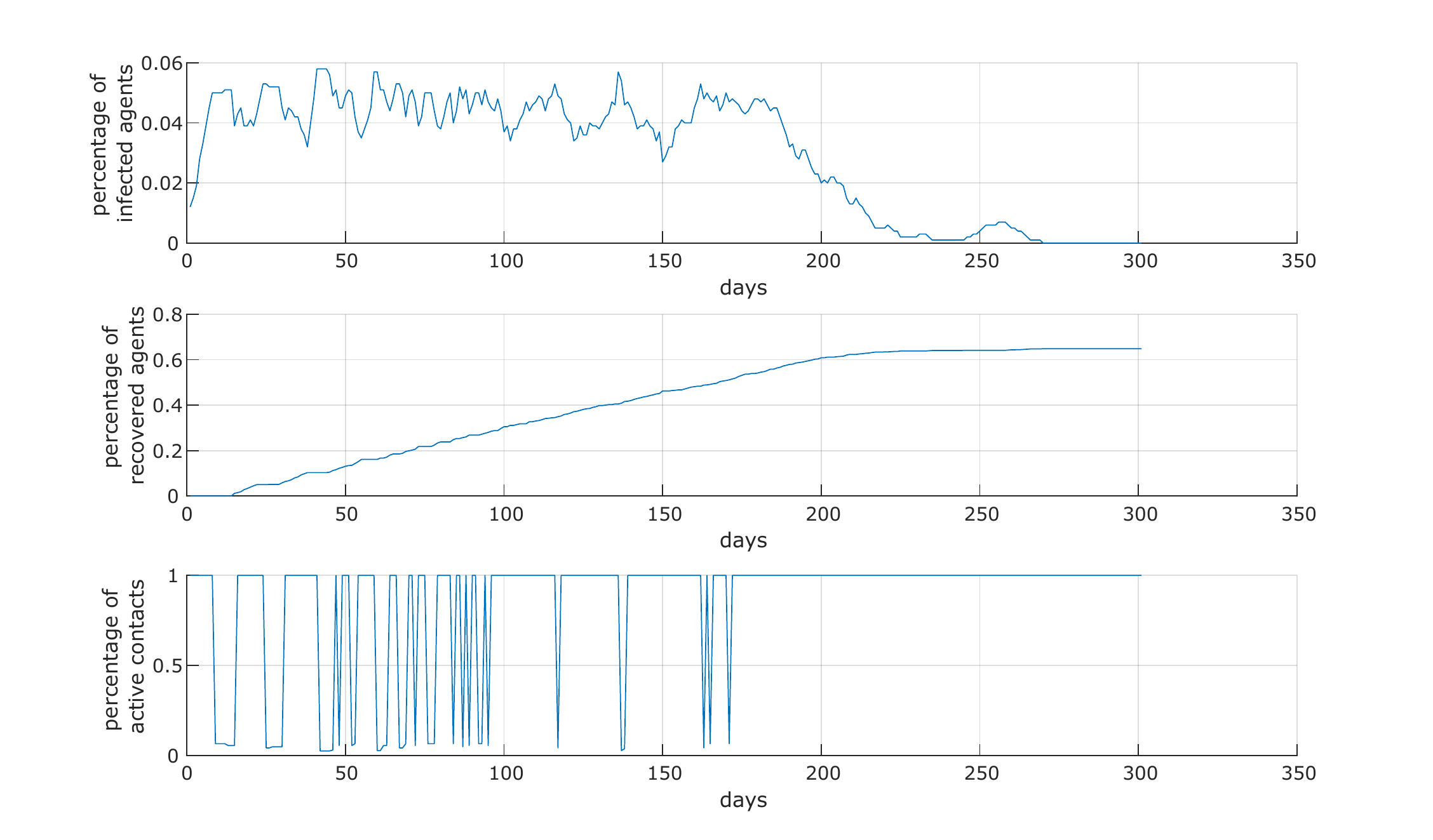}
  \caption{Infection, recovery and sociability curves when the vulnerability parameters have a step-wise dependence on the infection outspread}
  \label{fig:G_px_step}
\end{figure}
It should be pointed here that in the first case (Fig:\ref{fig:G_px_proportional}) the peak of the infection is $4.3\%$ and the total number of infected agents is $41.3\%$, while in the second case (Fig:\ref{fig:G_px_step}) the peak is $5.1\%$ and the total number of infected agents is $52.4\%$, which indicates a much worse behavior of the agents in the second case.

\subsection{ Effects of the graph topology on the outspread of the disease}
Except of the agents strategies of social distancing, another very important factor which affects the outspread of the disease is the topology of the underlying network which represents the social interactions of the agents. So, we present here some simulations to indicate the effects of the graph topology on the spreading of the disease and  the effectiveness of the agents strategies.\par
At first, we do not consider a social distancing game, so every two neighbors communicate freely with each other ($w_{ij}=a_{ij}$). In Figure \ref{fig:topology_comparison} we observe the infection curves for four different graph topologies: random graph \cite{random}, stochastic block model, scale free network \cite{Barabasi},\cite{Scale_free} and small world network \cite{small_world}. In every case we have chosen the network parameters in a way that the graphs have almost the same average degree ($\bar{d}\approx 10$), since it is a key parameter determining the scale of the infection level, as shown in  \eqref{contamination_probability}. In table \ref{table:2} we present some numerical characteristics of these curves.

\begin{figure}[h!]
\centering
  \includegraphics[width=\linewidth,height=8cm]{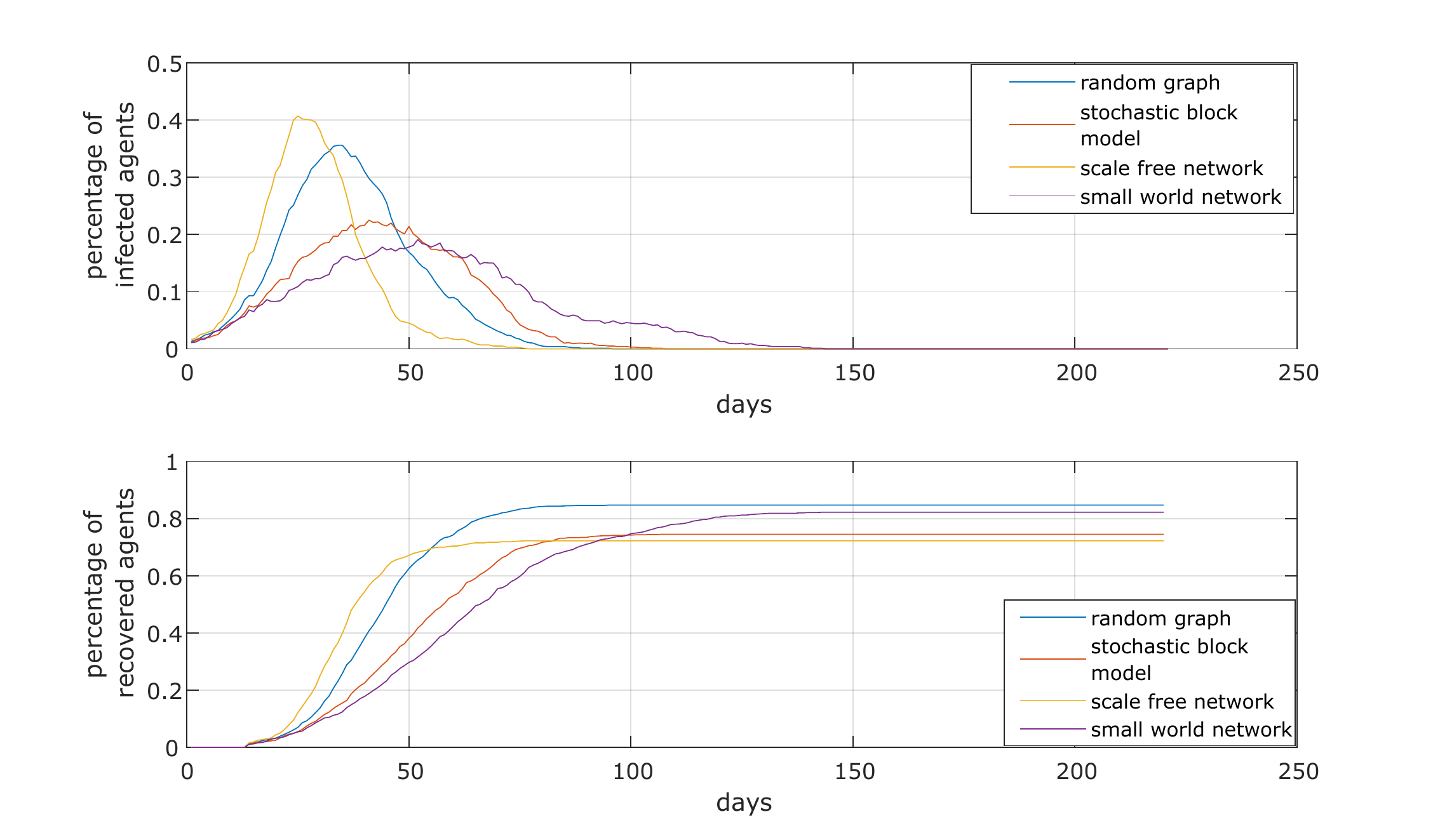}
  \caption{Infection curves for different graph topologies for the case of no social distacning}
  \label{fig:topology_comparison}
\end{figure}

\begin{table}[h!]
\centering
\begin{tabular}{ |p{5cm}||p{3cm}|p{3cm}|p{3cm}|  }
 \hline
 \multicolumn{4}{|c|}{Topology Comparison (No Social Distancing)} \\
 \hline
   & Average Degree & Infection Peak & Total Infection\\
 \hline
 Random Graph   & 9.8   & 36.2\% & 82.8\% \\
 Stochastic Block Model &   10.2  & 22.1\% & 74.1\% \\
 Scale Free Network & 9.9 & 41.3\% & 72.3\% \\
 Small World Network   & 10 & 17.9\% & 80.9\% \\
 \hline
\end{tabular}
\caption{Table to compare the infection for different graph topologies in the case of no social distancing}
\label{table:2}
\end{table}

We can make several interesting observations from Figure \ref{fig:topology_comparison} and Table \ref{table:2}. The scale free network with some central nodes with large degrees presents very early a sharp and high infection, while the small world network, which arises from a lattice and thus has high regularity and all the nodes have almost the same degrees, present a much lower but extended infection curve. However, the total numbers of infected agents are almost the same. Moreover, the stochastic block model - being an ill connected coalition of well connected random graphs - has a similar, in shape, reaction curve with the random graph, but with lower peak and significantly lower total number of infected agents.\par
Next, we proceed with the numerical study of the social distancing games with the two different information structures. We begin with the game with perfect local information, illustrated in Figure \ref{fig:topology_comparison_full_info} and Table \ref{table:3}.

\begin{figure}[h!]
\centering
  \includegraphics[width=\linewidth,height=8cm]{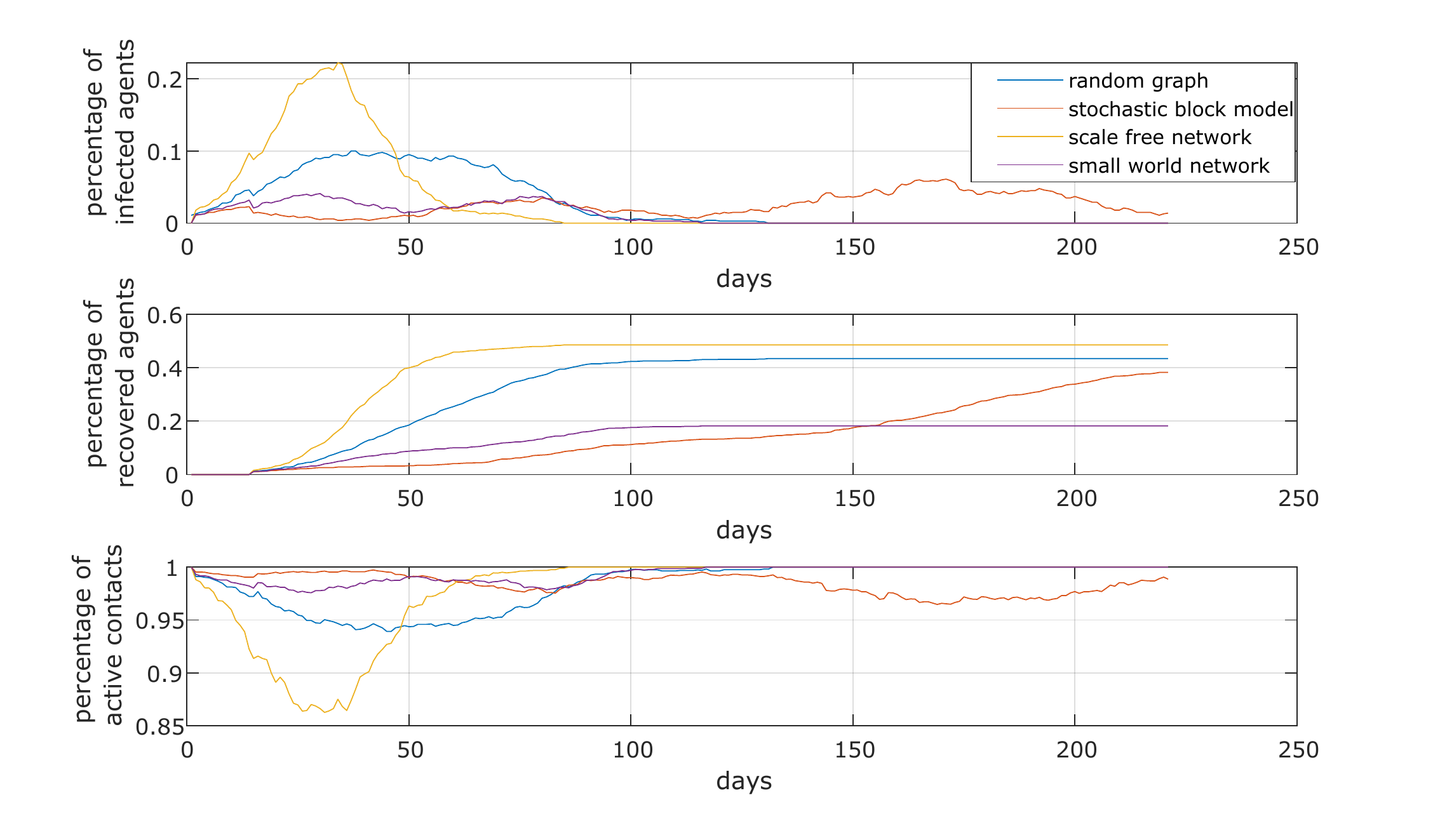}
  \caption{Infection curves for different graph topologies for the game with perfect local information}
  \label{fig:topology_comparison_full_info}
\end{figure}

\begin{table}[h!]
\centering
\begin{tabular}{ |p{5cm}||p{3cm}|p{3cm}|p{4cm}|  }
 \hline
 \multicolumn{4}{|c|}{Topology Comparison (Perfect Info Game)} \\
 \hline
   & Infection Peak & Total Infection & Minimum Sociability\\
 \hline
 Random Graph   & 10.2\% & 42.2\% & 94.8\% \\
 Stochastic Block Model  & 6.1\% & 38.8\% & 96.3\%\\
 Scale Free Network & 21.7\% & 49.8\% & 85.7\%\\
 Small World Network  & 4.1\% & 18.9\% & 97.5\%\\
 \hline
\end{tabular}
\caption{Table to compare the infection for different graph topologies for the case of the perfect local information game}
\label{table:3}
\end{table}

From these simulations we can derive the following conclusions. Firstly, in the scale free network the disease spreads quickly, even in the case that the agents follow social distancing strategies. Thus, we observe the highest peak, the greatest total prevalence and consequently the lowest level on the agents sociability in their effort to flatten this curve. In the cases of small world network and stochastic block model the infection peaks are low. Thus, the agents are not so concerned about the disease and relax their social distancing, resulting in second waves of the epidemic in both cases. However, in the case of the stochastic block model the duration of the waves is larger and consequently the total prevalence is also larger. Finally, in the case of the random graph we observe a greater peak than in the stochastic block model, but due to the response of the agents and the absence of remote cliques - which may act as sources of infection - the disease is eliminated in this case and we do not observe a second wave, resulting in a similar total prevalence with the stochastic block model.\par
Following that, we present simulations for the game with statistical information, illustrated in Figure \ref{fig:topology_comparison_distr_info} and Table \ref{table:4}.

\begin{figure}[h!]
\centering
  \includegraphics[width=\linewidth,height=8cm]{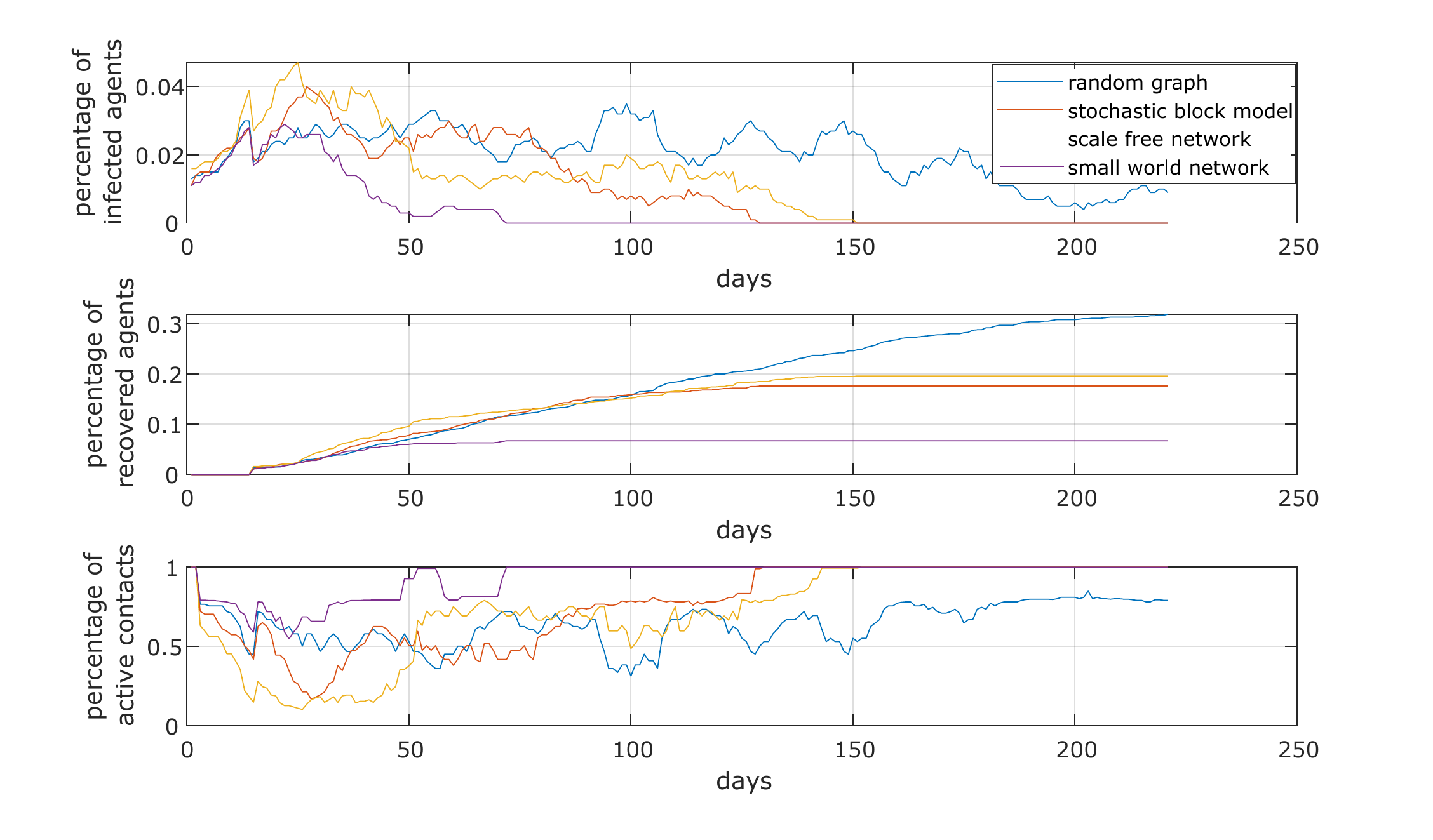}
  \caption{Infection curves for different graph topologies for the game with statistical information}
  \label{fig:topology_comparison_distr_info}
\end{figure}

\begin{table}[h!]
\centering
\begin{tabular}{ |p{5cm}||p{3cm}|p{3cm}|p{4cm}|  }
 \hline
 \multicolumn{4}{|c|}{Topology Comparison (Statistical Info Game)} \\
 \hline
   & Infection Peak & Total Infection & Minimum Sociability\\
 \hline
 Random Graph    & 3.5\% & 31.2\% & 36.1\%\\
 Stochastic Block Model & 4.0\% & 18.3\% & 23.4\%\\
 Scale Free Network & 4.6\% & 19.9\% & 17.2\%\\
 Small World Network  & 2.9\% & 7.1\% & 52.3\%\\
 \hline
\end{tabular}
\caption{Table to compare the infection for different graph topologies for the case of the game with statistical information}
\label{table:4}
\end{table}

There are several interesting observations for this case also. At first, in all graph topologies the infection level is kept very low but with a high cost on the sociability of the agents. Secondly, there do not exist significant differences on the peaks of the infection but there exist on the epidemic's duration, affecting this way its total prevalence. In the small world network the epidemic is eliminated quickly and with a comparatively small effort from the agents, resulting in a low total prevalence. In the scale free network and the stochastic block model the epidemic is also eliminated but has a relatively larger duration, resulting in similarly larger prevalence. Finally, in the random graph topology the epidemic lasts long and has the greatest prevalence. This phenomenon is probably a result of the well mixing of the agents, that arises more in the random graph topology, which contributes to the persistence of the disease even in the case that the agents cut several of their contacts.

\section{Conclusion}
A game-theoretic approach of social distancing has been considered. In the simple model proposed, the main parameters under examination are the network describing the structure of the interactions among the agents, which changes according to their rationally chosen strategies and the available information during the decision making. The effects of the spontaneous social distancing behavior on the prevalence of the epidemic is investigated both analytically and numerically through simulations on artificial networks. At the current stage, the proposed model is not intended for quantitative policy suggestions, since on the one hand it is simplistic and on the other hand the knowledge of realistic values for the parameters modeling human behavior requires real observations, many data and a proper statistical processing of them. However, it may be useful to offer a paradigm on the way the agents decide to adopt social distancing and the effects of these decisions on the prevalence of an epidemic.

\bibliographystyle{IEEEtran}
\bibliography{refs2}

\begin{thebibliography}{10}
\providecommand{\url}[1]{#1}
\csname url@samestyle\endcsname
\providecommand{\newblock}{\relax}
\providecommand{\bibinfo}[2]{#2}
\providecommand{\BIBentrySTDinterwordspacing}{\spaceskip=0pt\relax}
\providecommand{\BIBentryALTinterwordstretchfactor}{4}
\providecommand{\BIBentryALTinterwordspacing}{\spaceskip=\fontdimen2\font plus
\BIBentryALTinterwordstretchfactor\fontdimen3\font minus
  \fontdimen4\font\relax}
\providecommand{\BIBforeignlanguage}[2]{{%
\expandafter\ifx\csname l@#1\endcsname\relax
\typeout{** WARNING: IEEEtran.bst: No hyphenation pattern has been}%
\typeout{** loaded for the language `#1'. Using the pattern for}%
\typeout{** the default language instead.}%
\else
\language=\csname l@#1\endcsname
\fi
#2}}
\providecommand{\BIBdecl}{\relax}
\BIBdecl

\bibitem{Bernoulli}
D.~Bernoulli, ``Essai d'une nouvelle analyse de la mortalit{\'e} caus{\'e}e par
  la petite v{\'e}role, et des avantages de l'inoculation pour la
  pr{\'e}venir,'' \emph{Histoire de l'Acad., Roy. Sci.(Paris) avec Mem}, pp.
  1--45, 1760.

\bibitem{Kermack}
W.~O. Kermack and A.~G. McKendrick, ``A contribution to the mathematical theory
  of epidemics,'' \emph{Proceedings of the royal society of london. Series A,
  Containing papers of a mathematical and physical character}, vol. 115, no.
  772, pp. 700--721, 1927.

\bibitem{Ross}
R.~Ross, ``An application of the theory of probabilities to the study of a
  priori pathometry,'' \emph{Proceedings of the Royal Society of London. Series
  A, Containing papers of a mathematical and physical character}, vol.~92, no.
  638, pp. 204--230, 1916.

\bibitem{Pastor-Satorras}
R.~Pastor-Satorras, C.~Castellano, P.~Van~Mieghem, and A.~Vespignani,
  ``Epidemic processes in complex networks,'' \emph{Reviews of modern physics},
  vol.~87, no.~3, p. 925, 2015.

\bibitem{Newman1}
M.~E. Newman, ``Spread of epidemic disease on networks,'' \emph{Physical review
  E}, vol.~66, no.~1, p. 016128, 2002.

\bibitem{Newman2}
C.~Moore and M.~E. Newman, ``Epidemics and percolation in small-world
  networks,'' \emph{Physical Review E}, vol.~61, no.~5, p. 5678, 2000.

\bibitem{Meyers}
L.~A. Meyers, M.~Newman, and B.~Pourbohloul, ``Predicting epidemics on directed
  contact networks,'' \emph{Journal of theoretical biology}, vol. 240, no.~3,
  pp. 400--418, 2006.

\bibitem{Garnett}
G.~P. Garnett and R.~M. Anderson, ``Sexually transmitted diseases and sexual
  behavior: insights from mathematical models,'' \emph{Journal of Infectious
  Diseases}, vol. 174, no. Supplement\_2, pp. S150--S161, 1996.

\bibitem{Sander}
L.~Sander, C.~Warren, I.~Sokolov, C.~Simon, and J.~Koopman, ``Percolation on
  heterogeneous networks as a model for epidemics,'' \emph{Mathematical
  biosciences}, vol. 180, no. 1-2, pp. 293--305, 2002.

\bibitem{Epstein1}
J.~M. Epstein, ``Modelling to contain pandemics,'' \emph{Nature}, vol. 460, no.
  7256, pp. 687--687, 2009.

\bibitem{Epstein2}
J.~M. Epstein, J.~Parker, D.~Cummings, and R.~A. Hammond, ``Coupled contagion
  dynamics of fear and disease: mathematical and computational explorations,''
  \emph{PLoS One}, vol.~3, no.~12, 2008.

\bibitem{Cliff-Australia}
O.~M. Cliff, N.~Harding, M.~Piraveenan, E.~Y. Erten, M.~Gambhir, and
  M.~Prokopenko, ``Investigating spatiotemporal dynamics and synchrony of
  influenza epidemics in australia: An agent-based modelling approach,''
  \emph{Simulation Modelling Practice and Theory}, vol.~87, pp. 412--431, 2018.

\bibitem{Zhang1}
H.~Zhang, J.~Zhang, C.~Zhou, M.~Small, and B.~Wang, ``Hub nodes inhibit the
  outbreak of epidemic under voluntary vaccination,'' \emph{New Journal of
  Physics}, vol.~12, no.~2, p. 023015, 2010.

\bibitem{Chang}
S.~L. Chang, M.~Piraveenan, and M.~Prokopenko, ``Impact of network
  assortativity on epidemic and vaccination behaviour,'' \emph{arXiv preprint
  arXiv:2001.01852}, 2020.

\bibitem{Bagnoli}
F.~Bagnoli, P.~Lio, and L.~Sguanci, ``Risk perception in epidemic modeling,''
  \emph{Physical Review E}, vol.~76, no.~6, p. 061904, 2007.

\bibitem{Bauch1}
C.~T. Bauch and D.~J. Earn, ``Vaccination and the theory of games,''
  \emph{Proceedings of the National Academy of Sciences}, vol. 101, no.~36, pp.
  13\,391--13\,394, 2004.

\bibitem{Bauch2}
C.~T. Bauch, A.~P. Galvani, and D.~J. Earn, ``Group interest versus
  self-interest in smallpox vaccination policy,'' \emph{Proceedings of the
  National Academy of Sciences}, vol. 100, no.~18, pp. 10\,564--10\,567, 2003.

\bibitem{Reluga1}
T.~C. Reluga, C.~T. Bauch, and A.~P. Galvani, ``Evolving public perceptions and
  stability in vaccine uptake,'' \emph{Mathematical biosciences}, vol. 204,
  no.~2, pp. 185--198, 2006.

\bibitem{Reluga2}
T.~C. Reluga and A.~P. Galvani, ``A general approach for population games with
  application to vaccination,'' \emph{Mathematical biosciences}, vol. 230,
  no.~2, pp. 67--78, 2011.

\bibitem{Zhang2}
H.~Zhang, F.~Fu, W.~Zhang, and B.~Wang, ``Rational behavior is a
  ‘double-edged sword’when considering voluntary vaccination,''
  \emph{Physica A: Statistical Mechanics and its Applications}, vol. 391,
  no.~20, pp. 4807--4815, 2012.

\bibitem{Fine-Clarkson}
P.~E.~M. Fine and J.~A. Clarkson, ``Individual versus public priorities in the
  determination of optimal vaccination policies,'' \emph{American journal of
  epidemiology}, vol. 124, no.~6, pp. 1012--1020, 1986.

\bibitem{Kremer}
M.~Kremer, ``Integrating behavioral choice into epidemiological models of
  aids,'' \emph{The Quarterly Journal of Economics}, vol. 111, no.~2, pp.
  549--573, 1996.

\bibitem{Vardavas}
R.~Vardavas, R.~Breban, and S.~Blower, ``Can influenza epidemics be prevented
  by voluntary vaccination?'' \emph{PLoS computational biology}, vol.~3, no.~5,
  2007.

\bibitem{Del_Valle}
S.~Del~Valle, H.~Hethcote, J.~M. Hyman, and C.~Castillo-Chavez, ``Effects of
  behavioral changes in a smallpox attack model,'' \emph{Mathematical
  biosciences}, vol. 195, no.~2, pp. 228--251, 2005.

\bibitem{Chen2}
F.~H. Chen, ``Rational behavioral response and the transmission of stds,''
  \emph{Theoretical population biology}, vol.~66, no.~4, pp. 307--316, 2004.

\bibitem{Funk-Review}
S.~Funk, M.~Salath{\'e}, and V.~A. Jansen, ``Modelling the influence of human
  behaviour on the spread of infectious diseases: a review,'' \emph{Journal of
  the Royal Society Interface}, vol.~7, no.~50, pp. 1247--1256, 2010.

\bibitem{Reluga3}
T.~C. Reluga, ``Game theory of social distancing in response to an epidemic,''
  \emph{PLoS computational biology}, vol.~6, no.~5, 2010.

\bibitem{Poletti1}
P.~Poletti, M.~Ajelli, and S.~Merler, ``The effect of risk perception on the
  2009 h1n1 pandemic influenza dynamics,'' \emph{PloS one}, vol.~6, no.~2,
  2011.

\bibitem{Poletti2}
------, ``Risk perception and effectiveness of uncoordinated behavioral
  responses in an emerging epidemic,'' \emph{Mathematical Biosciences}, vol.
  238, no.~2, pp. 80--89, 2012.

\bibitem{Poletti3}
P.~Poletti, B.~Caprile, M.~Ajelli, A.~Pugliese, and S.~Merler, ``Spontaneous
  behavioural changes in response to epidemics,'' \emph{Journal of theoretical
  biology}, vol. 260, no.~1, pp. 31--40, 2009.

\bibitem{Funk1}
S.~Funk, E.~Gilad, C.~Watkins, and V.~A. Jansen, ``The spread of awareness and
  its impact on epidemic outbreaks,'' \emph{Proceedings of the National Academy
  of Sciences}, vol. 106, no.~16, pp. 6872--6877, 2009.

\bibitem{Chen1}
F.~H. Chen, ``Modeling the effect of information quality on risk behavior
  change and the transmission of infectious diseases,'' \emph{Mathematical
  biosciences}, vol. 217, no.~2, pp. 125--133, 2009.

\bibitem{d'Onofrio}
A.~d’Onofrio and P.~Manfredi, ``Information-related changes in contact
  patterns may trigger oscillations in the endemic prevalence of infectious
  diseases,'' \emph{Journal of Theoretical Biology}, vol. 256, no.~3, pp.
  473--478, 2009.

\bibitem{Fu}
F.~Fu, D.~I. Rosenbloom, L.~Wang, and M.~A. Nowak, ``Imitation dynamics of
  vaccination behaviour on social networks,'' \emph{Proceedings of the Royal
  Society B: Biological Sciences}, vol. 278, no. 1702, pp. 42--49, 2011.

\bibitem{van_Boven}
M.~van Boven, D.~Klinkenberg, I.~Pen, F.~J. Weissing, and H.~Heesterbeek,
  ``Self-interest versus group-interest in antiviral control,'' \emph{PLoS
  One}, vol.~3, no.~2, 2008.

\bibitem{Philipson1}
T.~Philipson and R.~Posner, \emph{Private choices and public health: The AIDS
  epidemic in an economic perspective}.\hskip 1em plus 0.5em minus 0.4em\relax
  Harvard University Press, 1993.

\bibitem{Philipson2}
P.-Y. Geoffard and T.~Philipson, ``Rational epidemics and their public
  control,'' \emph{International economic review}, pp. 603--624, 1996.

\bibitem{Philipson3}
------, ``Disease eradication: private versus public vaccination,'' \emph{The
  American Economic Review}, vol.~87, no.~1, pp. 222--230, 1997.

\bibitem{Game_survey}
S.~L. Chang, M.~Piraveenan, P.~Pattison, and M.~Prokopenko, ``Game theoretic
  modelling of infectious disease dynamics and intervention methods: a
  review,'' \emph{Journal of Biological Dynamics}, vol.~14, no.~1, pp. 57--89,
  2020.

\bibitem{Gross}
T.~Gross, C.~J.~D. D’Lima, and B.~Blasius, ``Epidemic dynamics on an adaptive
  network,'' \emph{Physical review letters}, vol.~96, no.~20, p. 208701, 2006.

\bibitem{Shaw}
L.~B. Shaw and I.~B. Schwartz, ``Fluctuating epidemics on adaptive networks,''
  \emph{Physical Review E}, vol.~77, no.~6, p. 066101, 2008.

\bibitem{Zanette}
D.~H. Zanette and S.~Risau-Gusm{\'a}n, ``Infection spreading in a population
  with evolving contacts,'' \emph{Journal of biological physics}, vol.~34, no.
  1-2, pp. 135--148, 2008.

\bibitem{Somarakis}
C.~E. Somarakis, G.~P. Papavassilopoulos, and F.~E. Udwadia, ``Nonlinear
  dynamics of a nwe cellular automata model,'' \emph{ECMS Conf., Nicosia,
  Cyprus}, June 2008.

\bibitem{random}
P.~Erd{\H{o}}s and A.~R{\'e}nyi, ``On the evolution of random graphs,''
  \emph{Publ. Math. Inst. Hung. Acad. Sci}, vol.~5, no.~1, pp. 17--60, 1960.

\bibitem{Scale_free}
A.-L. Barab{\'a}si, R.~Albert, and H.~Jeong, ``Scale-free characteristics of
  random networks: the topology of the world-wide web,'' \emph{Physica A:
  statistical mechanics and its applications}, vol. 281, no. 1-4, pp. 69--77,
  2000.

\bibitem{renardy}
M.~Renardy and R.~C. Rogers, \emph{An introduction to partial differential
  equations}.\hskip 1em plus 0.5em minus 0.4em\relax Springer Science \&
  Business Media, 2006, vol.~13.

\bibitem{Barabasi}
A.-L. Barab{\'a}si \emph{et~al.}, \emph{The scale free property, Chapt. 4, in
  Network science}.\hskip 1em plus 0.5em minus 0.4em\relax Cambridge university
  press, 2016.

\bibitem{small_world}
D.~J. Watts and S.~H. Strogatz, ``Collective dynamics of
  ‘small-world’networks,'' \emph{nature}, vol. 393, no. 6684, p. 440, 1998.

\end{thebibliography}

\end{document}